\documentclass[fleqn,10pt,twocolumn]{SICE_FES26}

\title{Analysis of Dynamic-Key LWE-Based Encrypted Control Systems for Asymptotic Stability and Numerical Safety}

\author{Jungjin Park${}^{\dagger}$ and Kiminao Kogiso}
\speaker{Jungjin Park}

\affils{Department of Mechanical and Intelligent Systems Engineering,\\ Graduate School of Informatics and Engineering,\\
The University of Electro-Communications, Chofu, Tokyo 182-8585,
Japan\\
(E-mail: \{parkjungjin,\,kogiso\}@uec.ac.jp)\\
}
\abstract{%
This study analyzes dynamic-key Learning-with-Errors (LWE)-based encrypted state-feedback control systems with time-varying encoders and decoders.  
Using a Lyapunov-based approach, we derive conditions on the time-varying encoder and decoder parameters that ensure both asymptotic stability and numerical safety by preventing overflow.  
The validity of the derived conditions is confirmed through numerical examples.
}

\keywords{%
Encrypted control, stability, reliability, post-quantum cryptography, fully homomorphic encryption
}

\usepackage{graphicx} 
\usepackage{subcaption}
\usepackage{mathtools}
\usepackage{amsmath}
\usepackage{amssymb}
\usepackage{amsfonts}
\usepackage{cite}
\usepackage{color}

\newcommand{\KeyGen}{{\mathsf{Gen}}}
\newcommand{\Ecd}{{\mathsf{Ecd}}}
\newcommand{\Dcd}{{\mathsf{Dcd}}}
\newcommand{\Enc}{{\mathsf{Enc}}}
\newcommand{\Encv}{{\mathsf{Enc}}_v}
\newcommand{\EncG}{{\mathsf{Enc}}_G}
\newcommand{\Dec}{{\mathsf{Dec}}}

\newcommand{\sk}{{\mathsf{sk}}}
\newcommand{\Setup}{{\mathsf{Setup}}}

\newcommand{\BitDecomp}{G^{-1}}

\newcommand{\UpdateKey}{{\mathcal{T}}_{{\mathcal{K}}}}
\newcommand{\Updatecv}{{\mathcal{T}}_{{\mathcal{C}}_v}}
\newcommand{\UpdateCG}{{\mathcal{T}}_{{\mathcal{C}}_G}}
\newcommand{\R}{{\mathbb{R}}}
\newcommand{\Q}{{\mathbb{Q}}}
\newcommand{\Z}{{\mathbb{Z}}}
\newcommand{\Zq}{{\mathbb{Z}}_q}
\newcommand{\Zt}{{\mathbb{Z}}_t}
\newcommand{\N}{{\mathbb{N}}}
\newcommand{\M}{{\mathcal{M}}}
\newcommand{\C}{{\mathcal{C}}}

\newcommand{\uniform}{{\mathcal{U}}(\Zq)}
\newcommand{\discrete}{{\mathcal{D}}(0,\sigma)}

\newtheorem{definition}{\textbf{Definition}}

\newtheorem{lemma}{\textbf{Lemma}}

\newtheorem{remark}{\textbf{Remark}}

\newcommand{\textrevise}[1]{\textcolor{black}{#1}}

\begin{document}

\maketitle


\section{Introduction}
\textrevise{Cyber-physical systems (CPSs) integrate physical plants and networked controllers using high-performance computation.
They are expected to enable applications such as collaborative robot systems and smart electric power grids~\cite{Industry5.0survey22}.
However, network connectivity also increases the risk of cyberattacks on measurement and control channels~\cite{Teixerira2015}.
Therefore, mitigating these threats while ensuring secure control systems is essential in CPSs.}

Encrypted control has attracted attention as an effective approach for mitigating these threats~\cite{Kogiso15,Junsoo2016,Darup2018}.
In encrypted control, homomorphic encryption is used to protect controller parameters and signals transmitted over communication networks while still allowing control computations to be performed on encrypted data.
However, the rapid progress of quantum computing raises concerns about the long-term security of conventional public-key cryptosystems, such as ElGamal and Paillier~\cite{Gerjuoy04,Proos03}.
To address this issue, cryptosystems based on the Learning with Errors (LWE) problem have been proposed and are considered resistant to quantum attacks~\cite{Regev05,Peikert09}.
In LWE-based cryptosystems, an intentional encryption error is introduced into the ciphertext to ensure security; however, if this error is not properly managed, it may lead to incorrect decryption.
Several post-quantum cryptosystems (PQC), including the Regev encryption~\cite{Regev05} and the GSW~\cite{GSW}, possess homomorphic properties and have therefore been investigated for encrypted control applications~\cite{Junsoo23,Robert24}.
In addition, a dynamic-key LWE encryption scheme~\cite{Park2026}, in which the private key and ciphertext are jointly updated, has been proposed. 
\textrevise{The additional update operations improve resistance to replay, falsification, and system-identification attacks based on recorded signals, at the cost of increased per-step computational and communication overhead relative to static-key LWE encryption.}

In CPSs, the stability of control systems is crucial, since they should continue to operate correctly under any circumstances. 
Although stability analysis of encrypted control systems based on PQC has been actively studied, it remains a challenging problem.
In~\cite{Junsoo23}, encryption and quantization errors were treated as bounded perturbations, and scaling parameters were proposed to ensure BIBO stability.
In~\cite{Teranishi2024}, the upper bounds on encryption and quantization errors were derived, but it was shown that it is difficult to achieve asymptotic stability with static quantizers alone. 
In~\cite{Ahn2025}, the conditions on time-varying and static quantizers were derived to ensure asymptotic stability under the assumption that encryption errors are negligible. 
However, encryption errors depend on the quantization gain and controller parameters~\cite{Park2026}, and inappropriate selection of quantization gains may result in non-negligible noise. 
Therefore, stability analysis of PQC-based encrypted control systems remains challenging due to the influence of encryption error.

The objective of this study is to derive the conditions of encoders and decoders for reliable encrypted state-feedback control based on a dynamic-key LWE encryption scheme.
The key idea is to evaluate the effects of quantization and encryption errors through a Lyapunov function. 
We derive conditions on the quantization gains, consisting of a static quantizer for the controller and a time-varying quantizer for the state, to ensure asymptotic stability in the presence of quantization and encryption errors.
Moreover, admissible ranges of these gains are characterized to avoid overflow.
Numerical examples demonstrate the effectiveness of the proposed encoders and decoders.
In particular, compared with the method in~\cite{Teranishi20}, which employs static and time-varying encoders without accounting for encryption noise, the proposed approach guarantees asymptotic stability even in the presence of encryption errors.

The contributions of this study are summarized as follows:
(i) We analyze the effects of quantization and encryption errors, and derive conditions on the quantization gains that ensure asymptotic stability of the encrypted state-feedback control system.
(ii)  We characterize the admissible range of quantization gains that guarantees numerical safety, i.e., prevents overflow, in the encrypted state-feedback control system.

\section{Preliminaries}
\subsection{Notations}
The sets of real numbers, rational numbers, integers, natural numbers, plaintext space, and ciphertext space are denoted by $\mathbb{R},\mathbb{Q},\mathbb{Z},\mathbb{N}, \mathcal{M},\mathcal{C}$, respectively.
The rounding is denoted by $\left\lceil\cdot\right\rfloor$.
We define ${\mathbb{Z}}^+ := \{z \in {\mathbb{Z}} \; | \; 0 \leq z\}$, ${\mathbb{Z}}_q := \{z \in {\mathbb{Z}} \; | \; -\frac{q}{2} < z \leq \frac{q}{2}\}$, and ${\mathbb{R}}^+ := \{r \in {\mathbb{R}} \; | \; 0 \leq r\}$, respectively.
For a scalar $a \in \R$, its absolute value is denoted by $|a|$.
The set of vectors of size $n$ is denoted by $\R^n$. 
The $j$th element of a vector $v$ is denoted by $v_j$.
The norm $\ell_2$ and the infinite norm of $v$ are denoted by $||v||$ and $\|v\|_\infty$, respectively.
The set of matrices of size $m \times n$ is denoted by $\R^{m \times n}$.
The $(i,j)$ entry of matrix $M$ is denoted by $M_{ij}$.
The minimum and maximum eigenvalue of $M$ are denoted by $\lambda_{\min}(M)$ and $\lambda_{\max}(M)$, respectively.
For $r\in\N$, $I_r\in\R^{r\times r}$ denotes the identity matrix.
The set of positive-definite symmetric matrices is denoted by ${\mathbb{S}}_{+}$.
The uniform distribution over $\Zq$ is denoted by $\uniform$, and the zero-mean discrete Gaussian distribution with standard deviation $\sigma>0$ is denoted by $\discrete$.
The gadget matrix $G$ is defined as
$G:=\begin{bmatrix} I_r & \nu I_r & \cdots & \nu^{d-1} I_r\end{bmatrix}$, where $\nu$ is the radix, and $d\in \mathbb{N}$ is chosen such that $\nu^{d-1}<q\le\nu^d$.
For an arbitrary vector $w=[w_0,\ldots,w_{r-1}]^{\top}\in\Zq^{r}$, the digit decomposition map $\BitDecomp(\cdot)$ expands each entry of $w$ into its base-$\nu$ representation and is defined as $\BitDecomp(w):=[w_{0,0},\ldots, w_{0,d-1},\,w_{1,0}, \ldots,w_{r-1,d-1}]^{\top}\in\Z^{dr}$,
where $w_i=\sum_{j=0}^{d-1} w_{i,j}\nu^j$ with $w_{i,j}\in\{-\nu+1,\ldots,\nu-1\}$. 
With these definitions, the relation $w=G\,\BitDecomp(w) $ holds.

\subsection{Dynamic-Key LWE Encryption Scheme}
From~\cite{Park2026}, the key-updatable LWE-based encryption scheme is presented as follows.
A dynamic-key LWE-based  encryption scheme $\Pi$ at step $k$ is defined as the tuple,
\begin{align*}
\Pi
:= (\Setup, \KeyGen, \Enc_v, \Enc_G, \Dec, \UpdateKey, \Updatecv, \UpdateCG),
\end{align*}
where $(\Setup, \KeyGen, \Enc_v, \Enc_G, \Dec)$ are the algorithms of a conventional LWE-based  encryption scheme~\cite{Junsoo20},
and the remaining tuple ($\UpdateKey$, $\Updatecv$, $\UpdateCG$) are update transition maps.
The algorithms are described as follows.
 \begin{itemize}
    \item $\Setup(1^\lambda)$: Choose a dimension $n\in\N$,
    modulus $t=2^{t_0}$, $t_0\in\N$, modulus $q=2^{q_0}$, $q_0\in\N$, such that $t\le q$, and a standard deviation $\sigma>0$.
    Choose $d\in\N$ and define $\nu:=2^{\nu_0}$ such that $\nu^{d-1}<q\le \nu^d$. Define the plaintext space $\M:=\Zt$, and the ciphertext spaces $\C_v:=\Zq^{n+1}$, $\C_G:=\Zq^{(n+1)\times d(n+1)}$. 
    Return ${\mathsf{p}}=(n,t,q,\sigma,\nu,d)$.

    \item $\KeyGen({\mathsf{p}})$: Generate the private key $\sk\in\Zq^n$ as a column vector, with each element sampled from $\uniform$.
    Return $\sk$ and $\tau = \begin{bmatrix}1\\ \sk\end{bmatrix}$.

    \item $\Encv(m\in\Zt,\sk)$: Given input $m\in\Zt$ and $\sk$, return the ciphertext $c\in\Zq^{n+1}$, 
    where $a\in\Zq^n$ is a random column vector and $\epsilon \in\Z$ is an error sampled from $\discrete$,  $b=-\sk^{\top} a+\tfrac{q}{t}m+ \epsilon \bmod q$, and $c=\begin{bmatrix}b\\ a\end{bmatrix}$.
    For simplicity, the ciphertext of a vector $X$ encrypted component-wise by $\Encv$ is denoted by $c_X$.
    
    \item $\EncG(m\in\Zt,\sk)$: Given input $m\in\Zt$ and $\sk$, return the ciphertext $C\in\Zq^{(n+1)\times d(n+1)}$,
    where $A\in\Zq^{n\times d(n+1)}$ is a random matrix, and $E\in\Zq^{d(n+1)}$ is an error row vector, with each element sampled from $\discrete$, $B=-\sk^{\top}A+E \bmod q$, and $C=mG+\begin{bmatrix}B\\ A\end{bmatrix}\bmod q$.
    For simplicity, the ciphertext of a vector $X$ encrypted component-wise by $\EncG$ is denoted by $C_X$.

    \item $\Dec(c\in\Zq^{n+1},\sk)$: Given ciphertext $c$ and private key $\sk$, compute $\left\lceil\tfrac{t}{q}\tau^\top c\right\rfloor = \left\lceil m + \tfrac{t}{q}e \right\rfloor $. If $\tfrac{t}{q}|e|<\tfrac{1}{2}$, then $\Dec(c,\sk)=m$.
    In general, let $\sk(k)$ denote the private key at step $k$ and $c(k) = \Encv(m,\sk(k))$.
    For $m\in\Zt$, if the parameters $t$ and $q$ satisfy $|\tfrac{t}{q} e(k)| < \tfrac{1}{2}$, then $ \Dec(c(k),\sk(k)) = m$, where $e(k)$ is the error generated by $\Encv$.
   Defining the encryption error as $\delta :=\Dec(c(k),\sk(k))-m$, if the inequality is satisfied, $\delta=0$; otherwise, $\delta\neq 0$. 
    Since $m \in \Zt$, the rounding operation yields $\left\lceil m + \tfrac{t}{q} e \right\rfloor = m + \left\lceil \tfrac{t}{q} e \right\rfloor$.
    Thus, the encryption error can be expressed as $\delta =\left\lceil \tfrac{t}{q} e \right\rfloor$.
     Additionally, the decryption algorithm $\Dec$ is not defined for matrix ciphertext $C$, since only a vector ciphertext $c$ is decrypted in encrypted control systems.

    \item $\UpdateKey$: The private key update map is defined as
    \begin{align*}
      &\UpdateKey:\ (\sk(k), \tau(k)) \mapsto (\sk(k+1), \tau(k+1)) \\
      &\quad = \left(\sk(k)+s(k)\bmod q, \begin{bmatrix}1\\ \sk(k)+s(k)\end{bmatrix}\bmod q\right),
    \end{align*}
    where \textrevise{$s(k)\in\Zq^n$} is a random column vector with each element sampled from $\uniform$.

    \item $\Updatecv$: The ciphertext update map for $c$ is defined as
    \begin{align*}
      \Updatecv:\ c(k) \mapsto c(k+1)
      =\begin{bmatrix}
        b-s(k)^{\top}a\\
        a
      \end{bmatrix}\bmod q,
    \end{align*}
    where $c(k)$ is the ciphertext encrypted by $\Encv$ and $\sk(k)$, and $s(k)\in\Zq^n$ is a random column vector with each element sampled from $\uniform$ at step $k$.

    \item $\UpdateCG$: The ciphertext update map for $C$ is defined as
    \begin{align*}
      \UpdateCG: C(k)\mapsto &C(k+1)  \\
      &  =mG+\begin{bmatrix}
        B-s(k)^{\top}A\\
        A
      \end{bmatrix}\bmod q,
    \end{align*}
    where $C(k)$ is the ciphertext encrypted by $\EncG$ and $\sk(k)$, and $s(k)\in\Zq^n$ is a random column vector with each element sampled from $\uniform$ at step $k$.
\end{itemize}

The homomorphic operations are defined as follows.
For ciphertexts $c_1(k), c_2(k) \in \Zq^{n+1}$, define the homomorphic addition as $c_1 \oplus c_2 := c_1(k) + c_2(k)$.
Moreover, for ciphertext $C \in \Zq^{(n+1)\times d(n+1)}$ and $c \in \Zq^{n+1}$, define the homomorphic multiplication as $C(k) \otimes c(k) := C(k) \BitDecomp(c(k))$.
In LWE-based encryption schemes, homomorphic operations increase the noise introduced during encryption. 
Specifically, let $\sk(k)$ be the private key at step $k$, and let $c_1(k) = \Encv(m_1,\sk(k))$ and $c_2(k) = \Encv(m_2,\sk(k))$ be ciphertexts with errors $\epsilon_1(k)$ and $\epsilon_2(k)$ generated by $\Encv$, respectively.
 For $m_1$ and $m_2$, $\Dec(c_1(k) \oplus c_2(k), \sk(k)) = m_1 + m_2 + \left\lceil \tfrac{t}{q}\epsilon^{{\mathsf{add}}} \right \rfloor$, where $\epsilon^{{\mathsf{add}}}(k) := \epsilon_1(k) + \epsilon_2(k)$.
Similarly,
let $\sk(k)$ be the private key at step $k$, $C(k) = \EncG(m_1, \sk(k))$, and $c(k) = \Encv(m_2, \sk(k))$, with errors $E(k)$ and $\omega(k)$ generated by $\EncG$ and $\Encv$, respectively.  
For $m_1, m_2 \in \Zt$, $\Dec(C(k) \otimes c(k), \sk(k)) = m_1 m_2+ \left\lceil \tfrac{t}{q}\epsilon^{{\mathsf{mult}}} \right \rfloor$, where $\epsilon^{\mathsf{mult}}(k) := m_1 \omega (k) + E(k) \BitDecomp(c(k))$.

\section{Problem Description}
This section describes the encoder/decoder design problem for reliable encrypted control systems.

\subsection{Unencrypted Control Systems}
We consider discrete-time state-feedback control systems with a linear plant given by:
\begin{align}
x(k+1)=A_px(k)+B_pu(k),\quad u(k)=Fx(k), \label{eq:system}
\end{align}
where $k\in\Z^+$, $x\in\R^{n_p}$, $u\in\R$, and $F\in\R^{n_p}$, denote the step, plant state, control input, and feedback gain, respectively. 
It assumes that the state is measurable, the pair $(A_p, B_p)$ is controllable, and the gain $F$ is designed such that $A_p+B_pF=:A_c$ is Schur stable.

\subsection{Encoder and Decoder}
Since the state and feedback gains are real-valued, they must be quantized to construct an encrypted control system. 
In this study, the quantization process is realized using the encoder $\Ecd_{\gamma}$ and decoder $\Dcd_{\gamma}$, defined as
\begin{align*}
\Ecd_{\gamma}\!: &\,\R\ni x\mapsto\check{x}=\left\lceil\gamma x\right\rfloor\in\M,\\
\Dcd_{\gamma}\!: &\,\M\ni\check{x}\mapsto\bar{x}=\gamma^{-1}\check{x}\in\Q,
\end{align*}
where $\gamma$ is a time-varying quantization gain, which plays an important role in the properties of the encrypted controls. 
The measured state $x(k)$ is encoded at every step using a quantization gain $\gamma(k)=\gamma_1(k)$, whereas the gain $F$ is encoded once using a static quantization gain $\gamma(k)=\gamma_0$ prior to the implementation of the control system.
Specifically, encoding the gain is written as 
\begin{align}
\Ecd_{\gamma_0}(F) &= \left\lceil \gamma_0 F\right\rfloor, \label{ecd0}
\end{align}
and encoding and decoding the signals are respectively 
\begin{align}
&\Ecd_{\gamma_1(k)}(x(k))=\left\lceil \gamma_1(k) x(k)\right\rfloor,\label{ecd1}\\
&\Dcd_{\gamma_2(k)}(\check{F}\check{x}(k))=\frac{1}{\gamma_2(k)}\check{F}\check{x}(k),\label{dcd2}
\end{align}
where $\gamma_2:=\gamma_0\gamma_1$.
Additionally, both $\Ecd_{\gamma}$ and $\Dcd_{\gamma}$ are applied element-wise to vectors.

The encoded values must lie within the plaintext space \textrevise{$\M=\Zt$} to avoid overflow during control operations.
Each element $m \in \M$ satisfies $-\frac{t}{2}<m\leq\frac{t}{2}$ from the definition of $\Zt$.
If the encoded value exceeds this range, an overflow occurs due to the modulo-$t$ operation, leading to incorrect decoding. 
Therefore, the encoder must be designed by appropriately choosing the quantization gain to avoid overflow.
To formalize this requirement, the overflow (underflow) is defined as follows.

\begin{definition}\label{def:overflow}
Let $\M=\Zt$ and $\check{x}=\Ecd_{\gamma}(x)$. 
If there exists an index $j\in\{1,\ldots,n_p\}$ such that $|\check{x}_j|\ge\frac{t}{2}$ holds, then an overflow is said to occur.
\end{definition}

\subsection{Controller Encryption}
The state-feedback control system \eqref{eq:system} is reconstructed using a dynamic-key LWE-based encryption scheme to obtain the encrypted control system, as follows.
Let $C_F(k)$, $c_x(k)$, and $c_u(k)$ denote the ciphertexts of the gain $F$, the state $x(k)$, and the control input $u(k)$ at step $k$, respectively.
Then, the encrypted controller, defined as $f_{\Pi}\!:(C_F(k),c_x(k))\mapsto c_u(k)$, is given by
\begin{align}\label{eq:enc_ctrl}
&(\sk(k+1),\tau(k+1)) = \UpdateKey(\sk(k),\tau(k)),\notag\\
&C_F(k+1) = \UpdateCG(C_F(k)),\notag\\
&c_u(k)=\bigoplus_{j=1}^{n_p}C_{F_j}(k)\otimes c_{x_j}(k),
\end{align}
where ${\mathsf{p}}=\Setup(1^{\lambda})$, $\sk(0)=\KeyGen({\mathsf{p}})$, $C_F(0)=\EncG(\check{F},\sk(0))$ with $\check{F}=\Ecd_{\gamma_0}(F)$, and $c_x(k)=\Encv(\Ecd_{\gamma_1(k)}(x(k)),\sk(k))$.
The encrypted controllers perform computations on the encrypted data, such as $F$, $x(k)$, and $u(k)$, using a time-varying (updatable) private key $\sk(k)$ to maintain security over time.

\textrevise{
\begin{remark}
We consider an honest-server setting in which the server processes only encrypted data.
For ciphertext updates, the random matrix~$A$ in $C_F(k)$ and the update vector~$s(k)$ must be stored securely on the server, since their exposure may compromise the encrypted control system.
The same~$s(k)$ is generated at the plant and the server by sharing a pseudo-random number generator seed and synchronizing the step~$k$.
\end{remark}
}

\subsection{Problem Statement}
The resulting LWE-based encrypted control system involves two types of errors: the quantization-induced and encryption-induced errors, denoted by $\delta_x:=\Dcd(\Ecd(x))-x=\bar{x}-x\in\R^{n_p}$ and for given $c_u$, $\delta_u:=\Dcd(\Dec(c_u))-\Dcd(\check{F}\check{x})\in\R$, respectively.
The former arises from converting real-valued signals into plaintext elements, while the latter is inherent to the encryption and \textrevise{decryption} processes.
These errors directly affect the control performance and may lead to instability or numerical issues such as overflow.
Therefore, it is essential to characterize their impact and account for them in the design of the encoder and decoder.
Based on this observation, we introduce the notion of reliability for encrypted control systems as follows.

\begin{definition}
An encrypted control system is said to be reliable if it satisfies the following two conditions:
(i) Stability: The resulting control system achieves asymptotic stability in the presence of errors, and
(ii) Numerical safety: No overflow occurs during operation.
\end{definition}

We consider the following problem.
\begin{problem}
Clarify the conditions on the quantization gains $\gamma_0$ and $\gamma_1$ in \eqref{ecd0}--\eqref{dcd2} to ensure that the resulting encrypted control system is reliable.
\end{problem}
We address this problem by analyzing the effects of quantization and encryption errors, deriving the corresponding conditions.

Figs.~\ref{fig:config} show the configurations of the original (unencrypted), quantized, and encrypted control systems. 
The quantized control system in Fig.~\ref{fig:config}(b) is a control system, in which the quantization error $\delta_x$ is added to the input through the encoder and decoder. 
Moreover, the encrypted control system in Fig.~\ref{fig:config}(c) further includes the encryption error $\delta_u$, which is added to the input in addition to the quantization error. 
Consequently, an encrypted control system based on the LWE encryption scheme is affected by both quantization and encryption errors, which potentially destabilize the system. 
This observation motivates the analysis carried out in this study to ensure reliable encrypted control system.

\begin{figure}[tb]
    \centering
    \hspace{2ex}
    \vspace{3ex}
    \begin{subfigure}{0.43\linewidth}
        \centering
        \includegraphics[width=\linewidth]{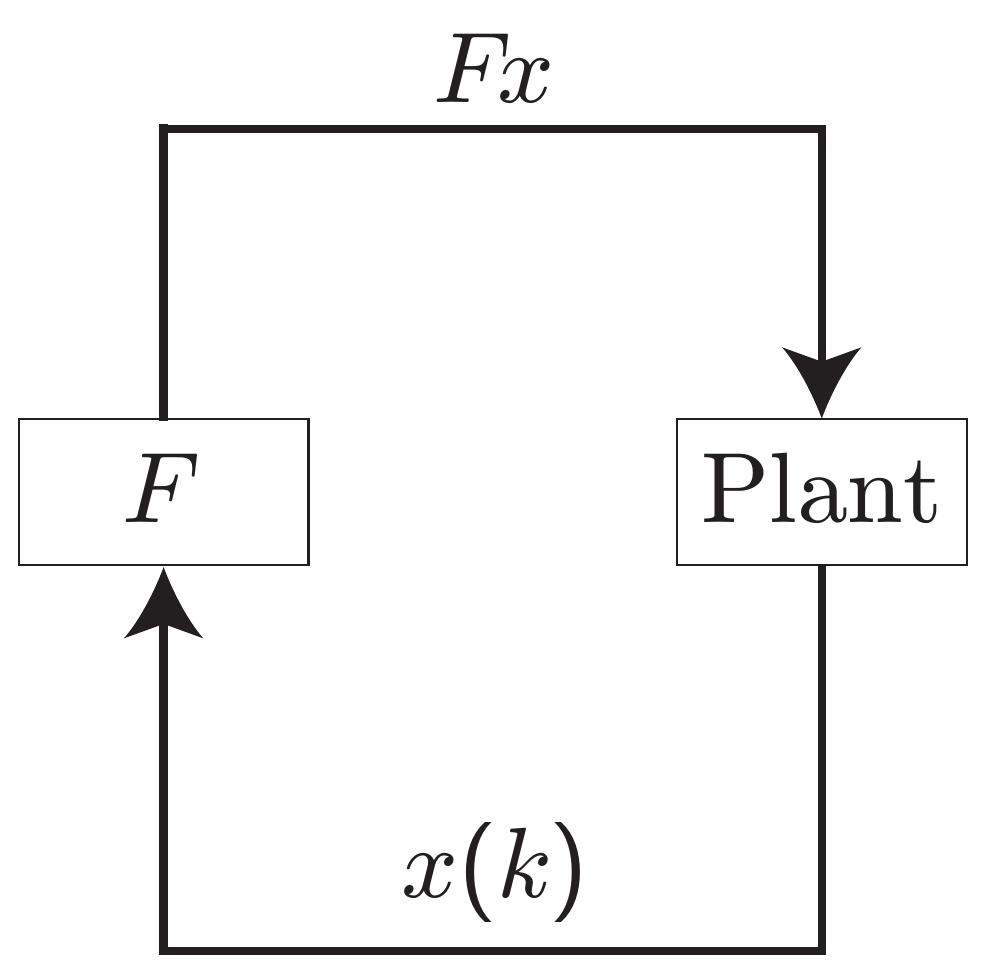}
        \caption{Original control system}
        \label{fig:org}
    \end{subfigure}
    \hfill
    \begin{subfigure}{0.48\linewidth}
        \centering
        \includegraphics[width=\linewidth]{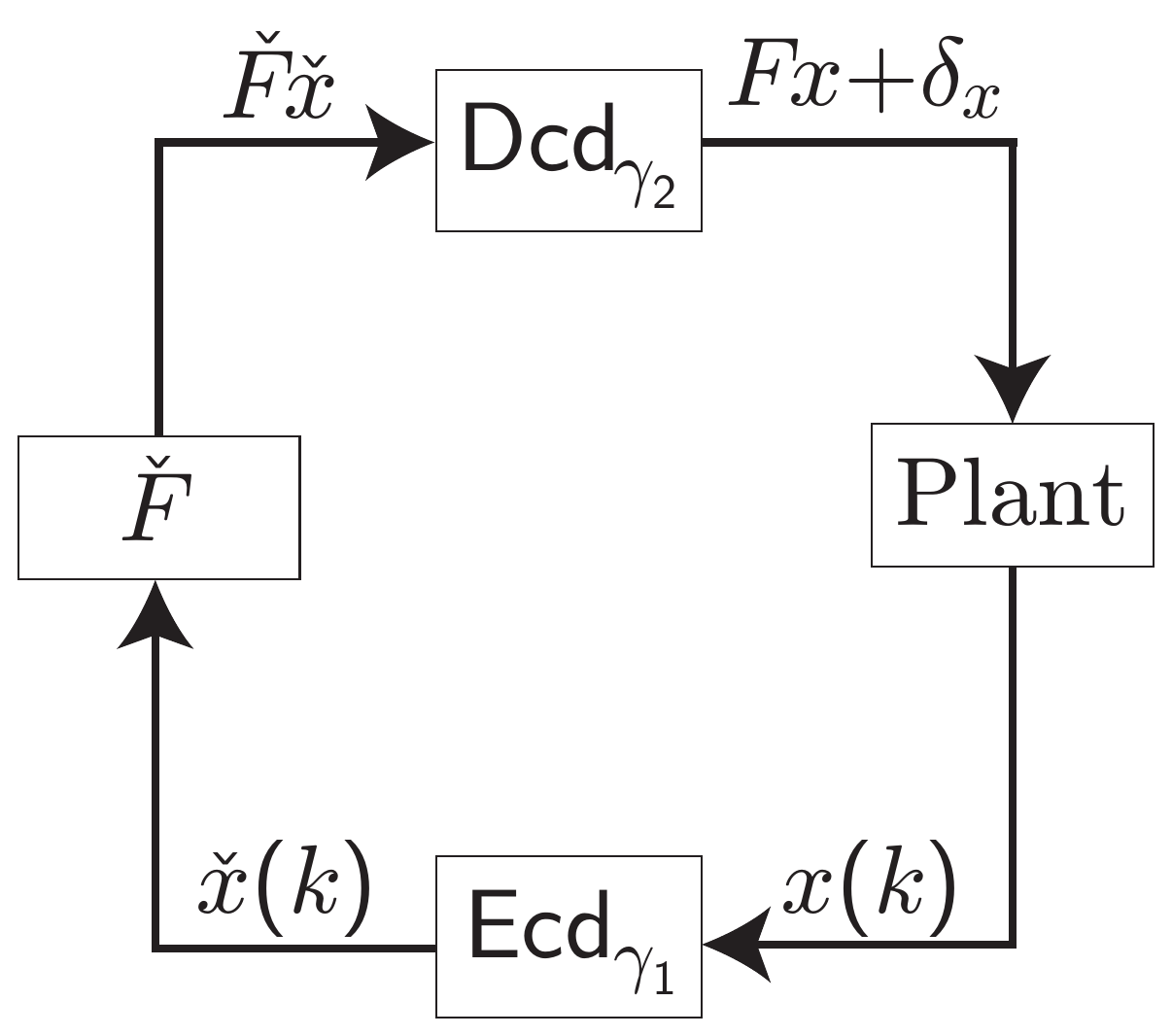}
        \caption{Quantized control system}
        \label{fig:quant}
    \end{subfigure}
    \begin{subfigure}{0.90\linewidth}
        \centering
        \includegraphics[width=\linewidth]{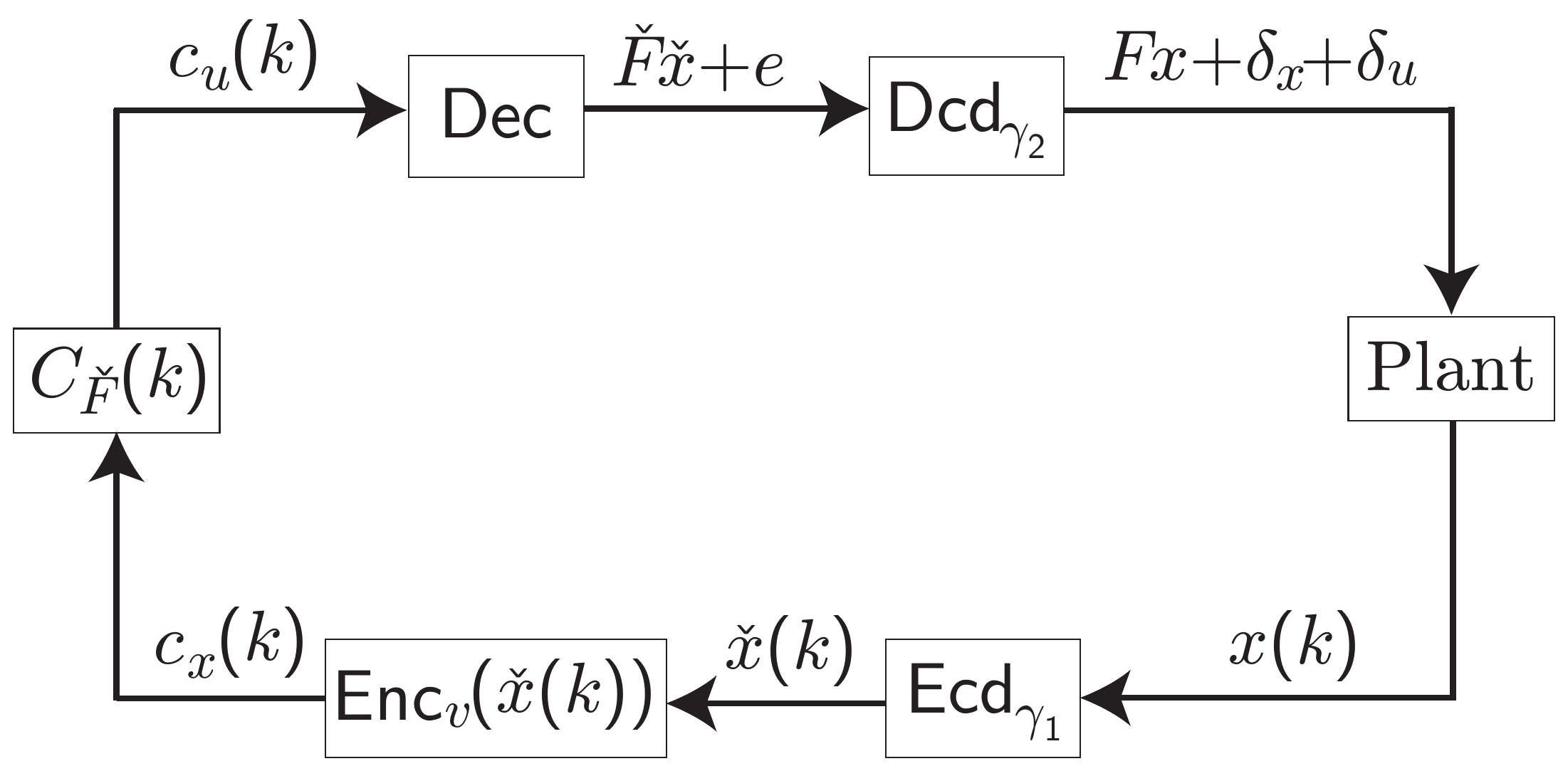}
        \caption{Encrypted control system}
        \label{fig:enc}
    \end{subfigure}
    \caption{Configurations of original, quantized, and encrypted control systems.}
    \label{fig:config}
    \vspace{-3ex}
\end{figure}

\section{Error Bound Analysis}\label{sec:ea}
This section characterizes the bounds of the quantization and encryption.
The quantization error satisfies the following bound~\cite{Teranishi20}.
\begin{lemma} \label{lem:quanterror}
Under $\gamma_0=\gamma_1\equiv\bar{\gamma}$, the quantization error of the state $x\in \R^{n_p}$ satisfies  $\|\delta_x\|\leq\frac{\sqrt{n_p}}{2}|\bar{\gamma}|^{-1}$.
\end{lemma}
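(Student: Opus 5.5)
The bound follows from the elementwise rounding error of the encoder, propagated through the decoder and then summed over the $n_p$ coordinates. We fix the gain $\gamma_1(k)=\bar{\gamma}$ used to encode the state in \eqref{ecd1}. The quantization error is $\delta_x=\Dcd_{\bar{\gamma}}(\Ecd_{\bar{\gamma}}(x))-x=\bar{\gamma}^{-1}\lceil\bar{\gamma}x\rfloor-x$. Componentwise, for each $j\in\{1,\ldots,n_p\}$ we write
\begin{align*}
(\delta_x)_j=\bar{\gamma}^{-1}\bigl(\lceil\bar{\gamma}x_j\rfloor-\bar{\gamma}x_j\bigr).
\end{align*}
This uses only the fact that the decoder is exact division by $\bar{\gamma}$. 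The encoded vector is implicitly assumed to lie in $\M$ without overflow, in the sense of Definition~\ref{def:overflow}, so no modulo-$t$ wrap-around is involved.

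The key step is the elementary rounding bound $|\lceil r\rfloor-r|\le\tfrac12$ for every real $r$. Applying it with $r=\bar{\gamma}x_j$ gives
\begin{align*}
|(\delta_x)_j|\le\tfrac12|\bar{\gamma}|^{-1}\quad\text{for every } j,
\end{align*}
that is, $\|\delta_x\|_\infty\le\tfrac12|\bar{\gamma}|^{-1}$.

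To pass to the Euclidean norm we use $\|v\|\le\sqrt{n_p}\,\|v\|_\infty$ for $v\in\R^{n_p}$. Equivalently, we sum the squares: $\|\delta_x\|^2=\sum_{j=1}^{n_p}(\delta_x)_j^2\le n_p\cdot\tfrac14|\bar{\gamma}|^{-2}$. Taking square roots gives $\|\delta_x\|\le\tfrac{\sqrt{n_p}}{2}|\bar{\gamma}|^{-1}$, which is the claimed bound.

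There is no real obstacle here. The only point needing care is to be explicit about two conventions. First, the rounding $\lceil\cdot\rfloor$ is to the nearest integer, so ties contribute error at most $\tfrac12$. Second, the hypothesis $\gamma_0=\gamma_1\equiv\bar{\gamma}$ matters only in that the state is encoded and decoded with the same nonzero gain $\bar{\gamma}$. The static gain $\gamma_0$ plays no role in $\delta_x$.
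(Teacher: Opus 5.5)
Your argument is correct and is the standard one. The paper gives no proof of its own and simply cites~\cite{Teranishi20}, where the bound comes from exactly this argument: a per-entry rounding error of half the plaintext spacing (which is $1$ for $\Zt$, as the paper's remark notes), scaled by $|\bar{\gamma}|^{-1}$ and then lifted via $\|v\|\le\sqrt{n_p}\,\|v\|_\infty$. Your observation that $\gamma_0$ plays no role also matches how the paper uses the lemma, namely with the time-varying gain $\gamma_1(k)$ in the proof of Lemma~\ref{theorem:2}.
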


The encryption error is characterized as follows.
\begin{lemma}\label{lem:delta}
For given parameters ${\mathsf{p}}$, $\delta_u$ is bounded as
\begin{align}\label{ieq:decerror}
    |\delta_u| \leq \frac{1}{\gamma_2}\left\lceil\frac{t}{q}n_p\kappa \sigma (\|\Ecd_{\gamma_0}(F)\|_\infty + (n+1)d\nu)\right\rfloor,
\end{align}
where $\kappa\in\mathbb{N}$ is a constant chosen such that the tail probability 
$\Pr(|\omega|>\kappa\sigma)$ is negligible.
\end{lemma}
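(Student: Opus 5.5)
We start from the definition $\delta_u=\Dcd_{\gamma_2}(\Dec(c_u))-\Dcd_{\gamma_2}(\check F\check x)$. Because $\Dcd_{\gamma_2}$ is linear with factor $\gamma_2^{-1}$, this gives $\delta_u=\gamma_2^{-1}\bigl(\Dec(c_u(k),\sk(k))-\check F\check x(k)\bigr)$. So it suffices to bound the integer decryption error of the aggregated ciphertext in \eqref{eq:enc_ctrl}.

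\textbf{Step 1: key updates do not change the noise.} We check that $\UpdateKey$, $\Updatecv$ and $\UpdateCG$ leave the error terms unchanged. Using $\tau(k+1)=[1;\sk(k)+s(k)]$, we get $\tau(k+1)^\top C_F(k+1)=\check F\,\tau(k+1)^\top G+B-s^\top A+(\sk+s)^\top A = \check F\,\tau(k+1)^\top G + E$. Hence $C_F(k)$ is a valid $\EncG$-type ciphertext of $\check F$ under $\sk(k)$, with the same row error $E$ for every $k$. The fresh $c_{x_j}(k)=\Encv(\check x_j(k),\sk(k))$ has error $\omega_j(k)\sim\discrete$. Therefore the multiplication and addition identities of Section~2 apply at every step $k$ with the current key.

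\textbf{Step 2: aggregate the errors.} Applying the multiplication rule to each term and the addition rule $n_p-1$ times, we expect
\[
\tau(k)^\top c_u(k)=\tfrac{q}{t}\check F\check x(k)+\sum_{j=1}^{n_p}\bigl(\check F_j\,\omega_j(k)+E_j\,\BitDecomp(c_{x_j}(k))\bigr) \bmod q .
\]
This requires verifying $\tau^\top C\,\BitDecomp(c)=m\,\tau^\top c+E\,\BitDecomp(c)$, which follows from $G\BitDecomp(c)=c$. It also requires tracking the $q/t$ scaling, which sits in $c_{x_j}$. Since $\check F\check x\in\Zt$ (no overflow), the rounding step in $\Dec$ gives $\Dec(c_u)-\check F\check x=\lceil \tfrac{t}{q}\epsilon\rfloor$, with $\epsilon$ equal to the sum above.

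\textbf{Step 3: bound each term.} With probability $1-$negligible we have $|\omega_j|\le\kappa\sigma$ and $|E_{j,i}|\le\kappa\sigma$. This gives $|\check F_j\omega_j|\le\|\check F\|_\infty\kappa\sigma$. The vector $E_j$ has length $(n+1)d$, and $\BitDecomp(c_{x_j})$ has entries of magnitude at most $\nu-1<\nu$, so $|E_j\BitDecomp(c_{x_j})|\le (n+1)d\nu\kappa\sigma$. Summing over $j$ yields $|\epsilon|\le n_p\kappa\sigma(\|\Ecd_{\gamma_0}(F)\|_\infty+(n+1)d\nu)$. We then use that rounding is monotone in absolute value, i.e.\ $|\lceil a\rfloor|\le\lceil |a|\rfloor\le\lceil b\rfloor$ for $|a|\le b$. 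Dividing by $\gamma_2$ then gives \eqref{ieq:decerror}.

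The main obstacle is Step~1 together with the modular bookkeeping. We must confirm that the update maps preserve the exact error structure, so that the noise does not grow over time. We must also confirm that the $\bmod q$ reductions do not interfere, which requires $|\epsilon|<q/2$ implicitly. Finally, the statement is probabilistic through $\kappa$, so the bound holds up to the negligible tail event, and we would state that explicitly.
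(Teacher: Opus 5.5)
Your proposal is correct and follows essentially the same route as the paper. You reduce $\delta_u$ to $\gamma_2^{-1}\lceil \tfrac{t}{q}e\rfloor$ via the homomorphic identities, bound $|e|$ termwise using the $\kappa\sigma$ tail bound and the digit size of $\BitDecomp(c_{x_j})$, and sum over the $n_p$ products. Your extra checks (invariance of $E$ under the update maps, monotonicity of rounding, the tail and wrap-around caveats) only make explicit what the paper takes for granted. One correction: the precise no-wrap condition is $\tfrac{q}{t}\check F\check x+\epsilon\in(-\tfrac{q}{2},\tfrac{q}{2}]$, not $|\epsilon|<\tfrac{q}{2}$.
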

\begin{proof}
Using \eqref{eq:enc_ctrl}, the encryption error is given by $\delta_u=\Dcd(\Dec(c_u))-\Dcd(\check{F}\check{x})$, where the first term can be expressed as
\begin{align*}
&\Dcd(\Dec(c_u))
=\Dcd\left(\Dec\left(\bigoplus_{j=1}^{n_p}C_{F_{j}}\otimes c_{x_j}\right)\right),\\
&=\Dcd\left(\left\lceil \frac{t}{q}\tau^\top \bigoplus_{j=1}^{n_p}C_{F_{j}}\otimes c_{x_j})\right\rfloor\right),\\
&=\gamma_2^{-1}\left \lceil \frac{t}{q}\left(\sum_{j=1}^{n_p} \left(\frac{q}{t} \check{F}_j\check{x}_j + (\check{F}_j \omega+E \BitDecomp(c_{x_j}))\right)\right)\right\rfloor,\\
&=\gamma_2^{-1}\left \lceil \sum_{j=1}^{n_p} \left(\check{F}_j\check{x}_j+\frac{t}{q} (\check{F}_j\omega+E \BitDecomp(c_{x_j}))\right)\right\rfloor,\\
&=\gamma_2^{-1}\left\lceil  \check{F}\check{x} + \tfrac{t}{q} e \right\rfloor=\gamma_2^{-1}\left(\check{F}\check{x}+\left\lceil \tfrac{t}{q} e\right\rfloor\right), \ (\because \check{F}\check{x}\in\mathbb{Z})
\end{align*}
where $e=\sum_{j=1}^{n_p} \check{F}_j \omega+E\BitDecomp(c_{x_j})$, and the second term is given by $\Dcd(\check{F}\check{x})=\gamma_2^{-1}\check{F}\check{x}$.
Thus, the error becomes  $\delta_u=\gamma_2^{-1}\lceil\tfrac{t}{q}e\rfloor$.
Next, we clarify the bounds of the error term $e$.
The first term of $e$ satisfies  $|\check{F}_j\omega| \le \|\check{F}_j\|_\infty |\omega|\le \|\check{F}_j\|_\infty\kappa\sigma$ since $|\omega|<\kappa\sigma$.
The second term of $e$ can be bounded as $|E\BitDecomp(c_{x_j}(k))|\le (n+1)d\nu \kappa\sigma$ since $|E_i|\le \kappa\sigma$ and $\BitDecomp(c_{x_j}(k))$ has $(n+1)d$ elements.
Summing over the $n_p$ multiplications yields $|e|\leq n_p (\|\Ecd_{\gamma_0}(F)\|_\infty+(n+1)d\nu)\kappa\sigma$, which leads to \eqref{ieq:decerror}.
\end{proof}

These bounds will be used to analyze the effect of the errors on the encrypted control system. 

\begin{remark}
In quantization error analysis, the maximum width between plaintext elements must be considered. 
For ElGamal encryption, the plaintext space is a cyclic group, so this width must be explicitly evaluated. 
In contrast, LWE-based schemes use $\Zt$, where plaintext elements are uniformly spaced with unit width, i.e., the distance between adjacent elements is 1.
\end{remark}

\section{Condition for Reliability}
This section analyzes the quantization errors arising from the static encoder for the controller gain and the time-varying encoder for the system state, and derives admissible ranges of the quantization gains that ensure asymptotic stability of the closed-loop system.
Moreover, this section presents the admissible range of quantization gains for which overflow and underflow do not occur.

\begin{figure*}[t]
\begin{align}
&h_0(P,Q):=\|B_p^\top P B_p\|^{-1}\left(\sqrt{\| A_c^\top PB_p\|^2+\lambda_{\min}(Q)\| B_p^\top PB_p\|}-\|A_c^\top P B_p\|\right),\label{h0}\\
&h_1(\bar{P},\bar{Q}):={\lambda^{-1}_{\min}(\bar{Q})}\left( \frac{\sqrt{n_p}}{2}\|\bar{A}_c^{\top}\bar{P}B_p\bar{F}\|+  \right.\notag \\
&\left.\sqrt{\left(\frac{\sqrt{n_p}}{2}\|\bar{A}_c^\top\bar{P}B_p\bar{F}\|+\bar{e}\|\bar{A}_c^\top\bar{P}B_p\|\right)^2 + \lambda_{\min}(\bar{Q})\left(\frac{n_p}{4}\|\bar{F}^\top B_p^\top\bar{P}B_p\bar{F}\|+\frac{\bar{e}^2}{\gamma_0^2}\|B_p^\top\bar{P}B_p\|+\frac{\sqrt{n_p}\bar{e}}{\gamma_0}\|B_p^\top \bar{P}B_p\bar{F}\|\right)} \right),\label{h1}\\
&\bar{e}({\mathsf{p}},\gamma_0,F):=\left|\left\lceil\frac{t}{q}n_p\kappa\sigma(\|\Ecd_{\gamma_0}(F)\|_\infty+(n+1)d\nu)\right\rfloor\right|, \label{eq:maxe}
\end{align}
\vspace{1mm}
\noindent\rule{\textwidth}{0.3pt}
\vspace{-2mm}
\end{figure*}

\subsection{Main Result}
The main result of this study is presented in the following theorem, which clarifies the condition under which the encrypted control system is reliable.

\begin{theorem}\label{thm:results}
If, for given $\mathsf{p}$ of $\Pi$, the quantization gains $\gamma_0$ and $\gamma_1$ in \eqref{ecd0}--\eqref{dcd2} satisfy the following conditions, respectively,
\begin{subequations}\label{thm:gamma}
\begin{align}
&\frac{2}{\sqrt{n_p}\, h_0(P,Q)}<|\gamma_0|<\frac{t-1}{2\|F\|_\infty}, \label{thm:gamma0}\\
&\frac{h_1(\bar{P},\bar{Q})}{\|x(k)\|}<|\gamma_1(k)|<\frac{t-1}{2n_p\|\check{F}\|_{\infty}\|x(k)\|_{\infty}},\ \ \forall k\in\mathbb{Z}^+, \label{thm:gamma1}
\end{align}
\end{subequations}
then the resulting LWE-based encrypted control system corresponding to \eqref{eq:system} is reliable. 
Here, $h_0(P,Q)$, $h_1(\bar{P},\bar{Q})$, and $\bar{e}({\mathsf{p},\gamma_0,F})$ are defined by \eqref{h0}--\eqref{eq:maxe}, and
$(P,Q)$ and $(\bar{P},\bar{Q})\in\mathbb{S}_{+}\times\mathbb{S}_{+}$ are the solutions to $A_c^{\top}PA_c-P=-Q$ and  $\bar{A}_c^{\top}\bar{P}\bar{A}_c-\bar{P}=-\bar{Q}$, respectively, where $\bar{A}_c:=A_p+B_p\bar{F}$.
\end{theorem}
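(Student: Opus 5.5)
The plan is to handle the two halves of reliability separately. Each half of the statement comes from one side of the two-sided inequalities \eqref{thm:gamma0}--\eqref{thm:gamma1}. The lower bounds give asymptotic stability through a two-stage Lyapunov argument. The upper bounds give numerical safety through a direct magnitude estimate in the sense of Definition~\ref{def:overflow}.

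\textbf{Stage 1: the static gain $\gamma_0$.} Write $\bar F:=\gamma_0^{-1}\check F = F+\Delta_F$, where $\|\Delta_F\|\le \frac{\sqrt{n_p}}{2}|\gamma_0|^{-1}$ by the argument of Lemma~\ref{lem:quanterror}. With $V(x)=x^\top P x$ and $A_c^\top P A_c - P=-Q$, the matrix $\bar A_c = A_c + B_p\Delta_F$ satisfies
\begin{align*}
\bar A_c^\top P\bar A_c - P \preceq -\Bigl(\lambda_{\min}(Q) - 2\|A_c^\top PB_p\|\,\|\Delta_F\| - \|B_p^\top PB_p\|\,\|\Delta_F\|^2\Bigr) I .
\end{align*}
The right-hand side is negative definite precisely when $\|\Delta_F\|<h_0(P,Q)$, since $h_0$ is the positive root of this quadratic in $\|\Delta_F\|$. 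This root condition is guaranteed by $|\gamma_0|>2/(\sqrt{n_p}h_0)$. Hence $\bar A_c$ is Schur stable, and a pair $(\bar P,\bar Q)\in\mathbb{S}_+\times\mathbb{S}_+$ solving the second Lyapunov equation exists.

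\textbf{Stage 2: the time-varying gain $\gamma_1(k)$.} The closed loop is
\begin{align*}
x(k+1) = \bar A_c x(k) + B_p\bigl(\bar F\delta_x(k) + \delta_u(k)\bigr).
\end{align*}
By Lemma~\ref{lem:quanterror} applied to $\gamma_1$, we have $\|\delta_x\|\le \frac{\sqrt{n_p}}{2}|\gamma_1|^{-1}$. By Lemma~\ref{lem:delta} and \eqref{eq:maxe}, we have $|\delta_u|\le \bar e/|\gamma_0\gamma_1|$. Take $\bar V=x^\top\bar P x$. Then
\begin{align*}
\Delta\bar V \le -\lambda_{\min}(\bar Q)\|x\|^2 + 2\|x\|\bigl(\|\bar A_c^\top\bar PB_p\bar F\|\|\delta_x\| + \|\bar A_c^\top\bar PB_p\|\,|\delta_u|\bigr) + (\text{quadratic terms in }\delta_x,\delta_u).
\end{align*}
Substituting the two bounds makes the right-hand side a quadratic in $\eta:=1/|\gamma_1|$ with coefficients depending on $\|x\|$. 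Negativity then reduces to $|\gamma_1|\|x\|>h_1(\bar P,\bar Q)$, since $h_1$ is exactly the corresponding root after dividing through by $\|x\|^2$. The coefficients of the cross terms, and the placement of $\gamma_0$ inside the square root of \eqref{h1}, suggest this algebra matches the formula. I expect this bookkeeping of the cross terms to be the main obstacle: each term must be bounded so that the root comes out exactly as in \eqref{h1}. I would collect the terms by powers of $\eta\|x\|^{-1}$ before completing the square. Given $\Delta\bar V(k)<0$ whenever $x(k)\neq0$, uniform negativity still needs care. I would argue that $\Delta\bar V\le -\epsilon\|x\|^2$ for some $\epsilon>0$, because the strict inequality scales with $\|x\|$. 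This yields asymptotic convergence. Note also that $\delta_x=0$ and $\delta_u$ scales with $\gamma_1^{-1}\propto\|x\|$, so the perturbation vanishes at the origin.

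\textbf{Numerical safety.} This follows from the upper bounds. From $|\gamma_0|<\frac{t-1}{2\|F\|_\infty}$ we get $|\check F_j|\le|\gamma_0|\|F\|_\infty+\tfrac12<\tfrac t2$. For the product computed in $c_u$, $|\check F\check x|\le n_p\|\check F\|_\infty\|\check x\|_\infty$, which is less than $\tfrac t2$ by \eqref{thm:gamma1} up to the rounding half-unit, and this bound also covers $\check x$ itself. It remains to check that the rounding slack and the encryption error $\lceil\tfrac tq e\rfloor$ do not push values over $t/2$. This edge case may need the ``$t-1$'' margin, and I would verify it last.
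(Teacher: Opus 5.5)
Your decomposition matches the paper's. Stage~1 is Lemma~\ref{lemma:gamma0}, which the paper cites rather than proves. Stage~2 is Lemma~\ref{theorem:2}, and the safety part is Lemmas~\ref{lem:gammac} and~\ref{lem:gammap}. The trouble is that the four steps you assert rather than check do not hold when checked. The paper's proof either cites or asserts the same steps, so following it will not close them.

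\textbf{Stability.} (i) Your Stage~1 algebra is right, but it yields $\frac{\sqrt{n_p}}{2|\gamma_0|}<h_0$, i.e.\ $|\gamma_0|>\frac{\sqrt{n_p}}{2h_0}$. The hypothesis $|\gamma_0|>\frac{2}{\sqrt{n_p}h_0}$ implies this only for $n_p\le 4$, so ``guaranteed'' fails for $n_p\ge5$.

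(ii) You were right to worry about the bookkeeping, because $h_1$ is not the root of your quadratic. Set $r:=(|\gamma_1|\|x\|)^{-1}$ and substitute the two error bounds into \eqref{eq:quadra}. This gives $\Delta\bar V\le\|x\|^2\bigl(-\lambda_{\min}(\bar Q)+2\beta r+\Gamma r^2\bigr)$. Here $\beta=\frac{\sqrt{n_p}}{2}\|\bar A_c^\top\bar PB_p\bar F\|+\frac{\bar e}{|\gamma_0|}\|\bar A_c^\top\bar PB_p\|$, and $\Gamma$ is the bracket multiplying $\lambda_{\min}(\bar Q)$ under the root in \eqref{h1}. The threshold is therefore $h_1^\star=\lambda_{\min}^{-1}(\bar Q)\bigl(\beta+\sqrt{\beta^2+\lambda_{\min}(\bar Q)\Gamma}\bigr)$. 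The printed $h_1$ omits $\frac{\bar e}{|\gamma_0|}\|\bar A_c^\top\bar PB_p\|$ outside the root and has $\bar e$ instead of $\frac{\bar e}{|\gamma_0|}$ inside it. For example, if $|\gamma_0|\le1$ and $\bar e\|\bar A_c^\top\bar PB_p\|>0$, then $h_1<h_1^\star$, and $|\gamma_1|\|x\|>h_1$ no longer gives $\Delta\bar V<0$. You need $h_1^\star$ in the hypothesis.

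(iii) Homogeneity does not give a uniform $\epsilon$. The hypothesis only gives $r_k<1/h_1$ for each $k$, and $r_k$ may tend to that bound. The example's rule $\gamma_1(k)=h_1/\|x(k)\|+0.01$ does exactly this as $x\to0$. The decrease rate can then tend to zero, and $\bar V_k$ is only shown to be decreasing. The paper's summation argument silently assumes the same $\mathsf{c}>0$. Closing it requires a uniform margin, e.g.\ $|\gamma_1(k)|\|x(k)\|\ge(1+\mu)h_1^\star$ with a fixed $\mu>0$.

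\textbf{Numerical safety.} The rounding slack is not a half-unit. From $|\check x_j|\le|\gamma_1||x_j|+\frac12$ you only get $|\check F\check x|<\frac{t-1}{2}+\frac{n_p\|\check F\|_\infty}{2}$, and the ``$t-1$'' margin absorbs just $\frac12$. For instance, take $n_p=3$, $\check F=(2,2,2)$, $t=32$ and $\gamma_1x_j=2.55<\frac{31}{12}$. Then $\check x_j=3$ and $\check F\check x=18>\frac t2$. The paper's proof of Lemma~\ref{lem:gammap} makes the same ``$+\frac12$'' step. Replacing $t-1$ by $t-n_p\|\check F\|_\infty$ in \eqref{thm:gamma1} fixes it.

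Finally, drop the plan to absorb $\lceil\frac tq e\rfloor$ into that margin. Definition~\ref{def:overflow} concerns only encoded values, and an error as large as $\bar e$ cannot fit in a slack of $\frac12$. Preventing wrap-around of the decrypted input would need a genuinely stronger bound, such as $n_p\|\check F\|_\infty\|\check x\|_\infty+\bar e<\frac t2$.
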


\subsection{Proof of Theorem 1}
The overview of the proof is as follows.
A Lyapunov approach is used to analyze the stability of the state-feedback control system in the presence of quantization and encryption errors. 
The analysis results in lower bounds on the gains $\gamma_0$ and $\gamma_1$ to ensure asymptotic stability, as well as upper bounds on these gains to prevent overflow and guarantee numerical safety. 
Consequently, it is shown that the quantization gains satisfying \textbf{Theorem~\ref{thm:results}} achieve a reliable encrypted control system.

\subsubsection{Quantization error on gain}
The following lemma provides a condition on $\gamma_0$ under which the feedback-gain-quantized control system remains Schur stable~\cite{Teranishi20}.
\begin{lemma}\label{lemma:gamma0}
Consider a state-feedback control system with an encoded gain, i.e., $x(k+1)=A_px(k)+B_pu(k)$ and $u(k)=\bar{F}x(k)$, \textrevise{where $\check{F}=\Ecd_{\gamma_0}(F)$ and $\bar{F}=\Dcd_{\gamma_0}(\check{F})$.}
If $\gamma_0$ satisfies~\eqref{thm:gamma0}, then $\bar{A}_c\,(=A_p+B_p\bar{F})$ is Schur stable.
\end{lemma}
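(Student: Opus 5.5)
The plan is a Lyapunov perturbation argument. The quantized-gain loop is viewed as the nominal closed loop perturbed by the gain-quantization error. Write $\bar F = F + \Delta_F$ with $\Delta_F := \bar F - F$. Then $\bar{A}_c = A_c + B_p\Delta_F$, and we take $V(x)=x^\top P x$ with $(P,Q)$ the Lyapunov pair of $A_c$.

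\textbf{Step 1 (error bound).} Each entry of $\Delta_F$ satisfies $|(\Delta_F)_j| = |\gamma_0|^{-1}\,|\lceil\gamma_0F_j\rfloor-\gamma_0F_j|\le \frac{1}{2}|\gamma_0|^{-1}$. Hence $\|\Delta_F\|\le \frac{\sqrt{n_p}}{2}|\gamma_0|^{-1}$, which is the same computation as in Lemma~\ref{lem:quanterror}.

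\textbf{Step 2 (Lyapunov difference).} Expanding,
\begin{align*}
\bar{A}_c^\top P\bar{A}_c - P = -Q + A_c^\top P B_p\Delta_F + \Delta_F^\top B_p^\top P A_c + \Delta_F^\top B_p^\top P B_p \Delta_F .
\end{align*}
For any $x$ this gives
\begin{align*}
x^\top(\bar{A}_c^\top P\bar{A}_c - P)x \le -\bigl(\lambda_{\min}(Q) - 2\|A_c^\top P B_p\|\,\|\Delta_F\| - \|B_p^\top P B_p\|\,\|\Delta_F\|^2\bigr)\|x\|^2 ,
\end{align*}
using $\|\Delta_F^\top\Delta_F\| = \|\Delta_F\|^2$ and submultiplicativity. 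The right-hand side is negative definite whenever the scalar quadratic $\beta\,y^2 + 2\alpha\,y - \lambda_{\min}(Q) < 0$ holds at $y = \|\Delta_F\|$. Here $\alpha = \|A_c^\top PB_p\|$ and $\beta = \|B_p^\top PB_p\|$.

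\textbf{Step 3 (root of the quadratic).} For $y\ge0$ the quadratic is negative exactly when $0\le y<(\sqrt{\alpha^2+\beta\lambda_{\min}(Q)}-\alpha)/\beta$. This threshold is precisely $h_0(P,Q)$ in \eqref{h0}, and it is positive because $\lambda_{\min}(Q)>0$. By Step 1 it therefore suffices that $\frac{\sqrt{n_p}}{2}|\gamma_0|^{-1} < h_0(P,Q)$. This is equivalent to the lower bound $|\gamma_0| > 2/(\sqrt{n_p}\,h_0(P,Q))$ in \eqref{thm:gamma0}. The upper bound in \eqref{thm:gamma0} concerns overflow only and plays no role here.

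\textbf{Step 4 (conclusion).} Under this condition $V$ strictly decreases along $x(k+1)=\bar{A}_c x(k)$. Hence $\bar{A}_c^\top P\bar{A}_c - P \prec 0$ with $P\succ0$, and the discrete Lyapunov theorem gives that $\bar{A}_c$ is Schur stable.

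The only delicate point is the sign in the cross term. One must bound $-2x^\top A_c^\top P B_p\Delta_F x$ from above by $2\alpha\|\Delta_F\|\|x\|^2$ rather than dropping it. One must also confirm that the relevant root of the quadratic is the positive one, matching the form of $h_0$ exactly. Since $u$ is scalar, $\Delta_F$ is a row vector, so its spectral norm equals its Euclidean norm and Step 1 applies directly.
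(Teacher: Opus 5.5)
The paper gives no proof of this lemma; it defers to \cite{Teranishi20}. Your Lyapunov perturbation argument is the natural one behind it. Steps 1, 2 and 4 are correct. $h_0(P,Q)$ is indeed the positive root of $\beta y^2+2\alpha y-\lambda_{\min}(Q)$, and dividing by $\beta=B_p^\top PB_p$ is legitimate because $P\succ0$ and $B_p\neq0$ by controllability.

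The gap is the last sentence of Step 3. From $\frac{\sqrt{n_p}}{2}|\gamma_0|^{-1}<h_0(P,Q)$ one gets $|\gamma_0|>\frac{\sqrt{n_p}}{2h_0(P,Q)}$, not $|\gamma_0|>\frac{2}{\sqrt{n_p}\,h_0(P,Q)}$; you inverted the factor $\sqrt{n_p}/2$. What your argument actually establishes is Schur stability of $\bar{A}_c$ under $|\gamma_0|>\sqrt{n_p}/(2h_0)$. The claimed equivalence with the lower bound in \eqref{thm:gamma0} is false.

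This matters. The stated hypothesis $|\gamma_0|>2/(\sqrt{n_p}h_0)$ only yields $\|\Delta_F\|\le\frac{\sqrt{n_p}}{2|\gamma_0|}<\frac{n_p}{4}h_0(P,Q)$. That bound lies below the root $h_0$ exactly when $n_p\le4$.

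\begin{itemize}
\item For $n_p\le4$, which includes the paper's example with $n_p=3$, you can close the proof with the one-line comparison $\frac{2}{\sqrt{n_p}}\ge\frac{\sqrt{n_p}}{2}$. The stated bound is then stronger than what you need.
\item For $n_p\ge5$, the only bound you have, $\frac{\sqrt{n_p}}{2|\gamma_0|}$, can exceed $h_0$ as $|\gamma_0|$ approaches $2/(\sqrt{n_p}h_0)$ from above. In that range your quadratic estimate gives no sign information on $\bar{A}_c^\top P\bar{A}_c-P$, so the argument does not prove the lemma as written.
\end{itemize}

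The lower bound in \eqref{thm:gamma0} appears to have the factor inverted relative to what this standard argument (and the form of $h_0$) supports. You should either state the result with threshold $\sqrt{n_p}/(2h_0(P,Q))$ or explicitly restrict to $n_p\le4$, rather than assert equivalence.
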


In addition to ensuring Schur, the quantization gain $\gamma_0$ must be appropriately selected to avoid overflow and underflow.
The following lemma gives a sufficient condition.
\begin{lemma}\label{lem:gammac}
\textrevise{If $\gamma_0$ satisfies}
\begin{align}\label{ieq:gamma0}
|\gamma_0|< \frac{t-1}{2||F ||_{\infty}},  
\end{align}
then overflow of $\check{F}$ does not occur.
\end{lemma}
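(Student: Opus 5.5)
The plan is a direct elementwise estimate. We bound each component of the encoded gain $\check{F}=\Ecd_{\gamma_0}(F)=\lceil\gamma_0 F\rfloor$ and then compare the bound with the overflow threshold in Definition~\ref{def:overflow}, applied componentwise to $\check{F}$ with $j\in\{1,\ldots,n_p\}$.

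First, fix an index $j$. The rounding error satisfies $|\lceil y\rfloor - y|\le \tfrac12$ for every real $y$, so the triangle inequality gives
\begin{align*}
|\check{F}_j| = |\lceil \gamma_0 F_j\rfloor| \le |\gamma_0||F_j| + \tfrac12 \le |\gamma_0|\,\|F\|_\infty + \tfrac12 .
\end{align*}
Second, we insert the hypothesis \eqref{ieq:gamma0}. If $F\neq 0$, it gives $|\gamma_0|\,\|F\|_\infty<\tfrac{t-1}{2}$, and therefore
\begin{align*}
|\check{F}_j| < \tfrac{t-1}{2}+\tfrac12=\tfrac{t}{2}\quad\text{for all } j .
\end{align*}
This is exactly the negation of the overflow condition $|\check{F}_j|\ge \tfrac t2$. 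It also places every $\check F_j$ strictly inside $\Zt=\{-\tfrac t2<m\le\tfrac t2\}$, so no reduction modulo $t$ alters the encoded value.

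The degenerate case $F=0$ must be treated separately, since the right-hand side of \eqref{ieq:gamma0} is then interpreted as $+\infty$. In that case $\check F=0$, and no overflow occurs trivially.

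The only point that needs care is the strictness of the final inequality. The $-1$ in the numerator of \eqref{ieq:gamma0} exactly absorbs the worst-case rounding of $\tfrac12$. Combined with the strict inequality in the hypothesis, this gives the strict bound $|\check F_j|<\tfrac t2$ that Definition~\ref{def:overflow} requires. No further obstacle is expected, because the argument uses only the elementary rounding bound.
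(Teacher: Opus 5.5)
Your proof is correct and follows essentially the same route as the paper. You bound each $|\check F_j|$ by $|\gamma_0|\,\|F\|_\infty+\tfrac12$ using the rounding error, then use \eqref{ieq:gamma0} to get $|\check F_j|<\tfrac t2$, which negates the overflow condition of Definition~\ref{def:overflow}; your direct use of $|\lceil y\rfloor-y|\le\tfrac12$ (instead of the paper's intermediate step through $\lceil|\gamma_0|\,\|F\|_\infty\rfloor$) and your separate handling of $F=0$ are harmless refinements.
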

\begin{proof}
Assume that the condition~\eqref{ieq:gamma0} holds. 
Then, for each element $F_j$ of $F$ with $j \in \{1,\ldots,n_p\}$, we have
$|\gamma_0 F_j| \leq |\gamma_0| \|F\|_{\infty} < \frac{t-1}{2}$.
Since $\check{F}_j = \left\lceil \gamma_0 F_j \right\rfloor$, the rounding function satisfies
$\left| \left\lceil \gamma_0 F_j \right\rfloor \right| \leq \left \lceil|\gamma_0| \|F\|_{\infty} \right\rfloor \leq |\gamma_0| \|F\|_{\infty} + \frac{1}{2} < \frac{t-1}{2} + \frac{1}{2} = \frac{t}{2}$.
Therefore, if the condition~\eqref{ieq:gamma0} is satisfied, it guarantees that $|\check{F}_j| < \frac{t}{2}$ for all $j$, and hence no overflow occurs by \textbf{Definition~\ref{def:overflow}}.
\end{proof}

\subsubsection{Quantization error on communication signals}
Under the assumption that $\bar{A}_c$ is Schur, we analyze the effect of time-varying quantization on the communication signals and encryption error.
The following lemma characterizes a condition on the time-varying quantization gain $\gamma_1(k)$ under which the closed-loop system achieves asymptotic stability.
\begin{lemma}\label{theorem:2}
Consider a state-feedback control system with an encoded gain and communication signals, i.e., $x(k+1)=A_px(k)+B_pu(k)$ and $u(k)=\bar{F}\bar{x}(k)+\delta_u$, \textrevise{where $\check{F}=\Ecd_{\gamma_0}(F)$, $\check{x}=\Ecd_{\gamma_1}(x)$, $\bar{F}=\Dcd_{\gamma_0}(\check{F})$, and $\bar{x}=\Dcd_{\gamma_1}(\check{x})$.}
Assume that $\bar{A}_c$ is Schur stable.
If there exist $\gamma_0$ and $\gamma_{1}(k)$, $\forall k\in\mathbb{Z}^+$, such that the conditions \eqref{thm:gamma} are satisfied, then the resulting quantized control system is asymptotically stable.
\end{lemma}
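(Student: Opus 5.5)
We would use the quadratic Lyapunov function $V(x)=x^\top\bar{P}x$, where $(\bar{P},\bar{Q})$ solves $\bar{A}_c^{\top}\bar{P}\bar{A}_c-\bar{P}=-\bar{Q}$. This pair exists because $\bar{A}_c$ is Schur, which follows from \eqref{thm:gamma0} via Lemma~\ref{lemma:gamma0}. The plan is to show $\Delta V(k):=V(x(k+1))-V(x(k))<0$ for every $x(k)\neq 0$ whenever \eqref{thm:gamma1} holds. Since $\bar{x}=x+\delta_x$, the closed loop is
\begin{align*}
x(k+1)=\bar{A}_c x(k)+B_p\bar{F}\delta_x(k)+B_p\delta_u(k).
\end{align*}
Expanding $\Delta V$ gives $-x^\top\bar{Q}x$ plus cross terms $2x^\top\bar{A}_c^\top\bar{P}B_p\bar{F}\delta_x$ and $2x^\top\bar{A}_c^\top\bar{P}B_p\delta_u$. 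It also gives quadratic terms $\delta_x^\top\bar{F}^\top B_p^\top\bar{P}B_p\bar{F}\delta_x$ and $\delta_u^2B_p^\top\bar{P}B_p$, and a mixed term $2\delta_u B_p^\top\bar{P}B_p\bar{F}\delta_x$.

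Each error is then bounded in terms of $\gamma_1(k)$. Lemma~\ref{lem:quanterror}, applied with gain $\gamma_1(k)$ (element-wise rounding gives each component error at most $\frac{1}{2}|\gamma_1|^{-1}$), yields $\|\delta_x\|\le\frac{\sqrt{n_p}}{2}|\gamma_1(k)|^{-1}$. Lemma~\ref{lem:delta} with \eqref{eq:maxe} yields $|\delta_u|\le \bar{e}/|\gamma_2(k)|=\bar{e}/(|\gamma_0||\gamma_1(k)|)$. Using Cauchy--Schwarz and submultiplicativity, with $\rho:=|\gamma_1(k)|^{-1}$ and $\xi:=\|x(k)\|$,
\begin{align*}
\Delta V\le -\lambda_{\min}(\bar{Q})\xi^2+2\rho\,\xi\,\beta+\rho^2\eta,
\end{align*}
where
\begin{align*}
\beta=\tfrac{\sqrt{n_p}}{2}\|\bar{A}_c^\top\bar{P}B_p\bar{F}\|+\tfrac{\bar{e}}{|\gamma_0|}\|\bar{A}_c^\top\bar{P}B_p\|
\end{align*}
and $\eta$ is the bracketed quadratic coefficient appearing in \eqref{h1}. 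In $\eta$ the $\bar{e}$ terms carry $1/\gamma_0$ factors. The grouping of $\gamma_0$ factors in $\beta$ and $\eta$ would be reconciled with the exact form of $h_1$ by absorbing $|\gamma_0|^{-1}\le$ appropriate constants, or by reading $\bar{e}$ as already normalized. The main bookkeeping obstacle is matching this grouping to \eqref{h1}.

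The right-hand side is a quadratic in $\xi/\rho=|\gamma_1(k)|\|x(k)\|$: dividing by $\rho^2$ gives $-\lambda_{\min}(\bar{Q})z^2+2\beta z+\eta$ with $z=|\gamma_1|\|x\|$. This is negative exactly when $z$ exceeds the positive root $\lambda_{\min}^{-1}(\bar{Q})\bigl(\beta+\sqrt{\beta^2+\lambda_{\min}(\bar{Q})\eta}\bigr)$. That root is bounded by $h_1(\bar{P},\bar{Q})$, since $h_1$'s leading term and square root dominate this expression. Hence the lower bound in \eqref{thm:gamma1} yields $\Delta V<0$ for all $x(k)\ne0$.

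The delicate part is upgrading $\Delta V<0$ to asymptotic stability, because the decrease is not uniform. I would argue $\Delta V\le -\lambda_{\min}(\bar{Q})\xi^2+(\text{terms scaling like }\xi^2/z)$ and use a margin: requiring the strict inequality with a fixed slack, e.g.\ $|\gamma_1(k)|\|x(k)\|\ge(1+\varepsilon)h_1$, gives $\Delta V\le-c\|x\|^2$ for some $c>0$, and the standard quadratic-Lyapunov argument then yields exponential convergence. The upper bound in \eqref{thm:gamma1} is not needed for stability; it only ensures $|\check{F}\check{x}|<t/2$, so that decryption returns the correct plaintext product and the error model above is valid without modular wrap-around.
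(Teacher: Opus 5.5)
Your route is the paper's own. You use the same Lyapunov function $x^\top\bar{P}x$, the same closed-loop decomposition, the same error bounds from Lemmas~\ref{lem:quanterror} and~\ref{lem:delta}, the same root-of-the-quadratic threshold, and a summation argument for convergence. However, the argument fails at the step you defer as ``bookkeeping''.

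Put $a:=\frac{\sqrt{n_p}}{2}\|\bar{A}_c^\top\bar{P}B_p\bar{F}\|$, $b:=\bar{e}\|\bar{A}_c^\top\bar{P}B_p\|$, $\lambda:=\lambda_{\min}(\bar{Q})$ and $g:=|\gamma_0|$. Your positive root is $r=\lambda^{-1}\bigl(a+b/g+\sqrt{(a+b/g)^2+\lambda\eta}\bigr)$. By contrast, \eqref{h1} has only $a$ outside the square root and $a+b$ (without $1/g$) inside it. The inequality $h_1\ge r$ is equivalent to $\sqrt{(a+b)^2+\lambda\eta}\ge b/g+\sqrt{(a+b/g)^2+\lambda\eta}$. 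Squaring and using $\sqrt{(a+b/g)^2+\lambda\eta}\ge a+b/g$ shows this forces $(a+2b/g)^2\le(a+b)^2$, i.e.\ $g\ge 2$ when $b>0$. So for $|\gamma_0|<2$ one has $h_1<r$; the paper's own example has $\gamma_0=1.1622$. In that regime the lower bound in \eqref{thm:gamma1} does not give $\Delta V<0$.

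Absorbing $|\gamma_0|^{-1}$ into constants or renormalizing $\bar{e}$ cannot restore the additive $b/g$ missing outside the square root. So the claim that $h_1$ dominates the root is simply false. The paper's proof makes the same unverified substitution (that the root is at most $\gamma_1(k)^{-1}h_1$), so the defect lies in \eqref{h1}. The sound repair is to define $h_1:=\lambda^{-1}\bigl(\beta+\sqrt{\beta^2+\lambda\eta}\bigr)$ with your $\beta$; with that definition your threshold argument is correct.

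On the convergence step you are more careful than the paper, but you also change the hypothesis. Factoring the quadratic gives $\Delta V\le-\lambda\|x(k)\|^2(1-r/z_k)$ with $z_k:=|\gamma_1(k)|\|x(k)\|$, since the nonpositive root contributes a factor at least $1$. Strict inequality at each $k$ permits $z_k\downarrow r$ with $\sum_k(1-r/z_k)<\infty$. When the errors are only known to satisfy their bounds, this is compatible with $V$ converging to a positive limit.

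The paper simply posits a constant $\mathsf{c}>0$ with $V_{k+1}-V_k\le-\mathsf{c}\|x_k\|^2$, which does not follow from \eqref{thm:gamma1} as stated. Your $(1+\varepsilon)$ slack is therefore an extra assumption, not a consequence, and you should state it as such. With it, $\mathsf{c}=\lambda\varepsilon/(1+\varepsilon)$ and you get exponential decay. A weaker slack $z_k-r\ge\phi(\|x(k)\|)$ with $\phi$ of class $\mathcal{K}$ would already suffice for $x(k)\to 0$.
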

\begin{proof}
Using $\bar{x}=\delta_x+x$, the quantized control system is written by 
\begin{align} \label{eq:effect_error}
x(k+1)
&=A_px(k)+B_p(\bar{F}\bar{x}(k)+\delta_u(k)),\nonumber\\
&=\bar{A}_cx(k)+B_p\bar{F}\delta_x(k) + B_p\delta_u(k).
\end{align}
Now that $\bar{A}_c$ is Schur, there exists $\bar{P}\in{\mathbb{S}}_{+}$ for any $\bar{Q}\in{\mathbb{S}}_{+}$ such that $\bar{P}=\bar{A}_c^{\top}\bar{P}\bar{A}_c+\bar{Q}$ holds.
Let $V_k(x):=x^{\top}(k)\bar{P}x(k)$ be a Lyapunov function candidate, then
\begin{align}
&V_{k+1}(x)-V_k(x) \notag\\
&=x^\top\bar{A}_c^\top \bar{P}B_p\bar{F}\delta_x+x^\top\bar{A}_c^\top\bar{P}B_p \delta_u\notag\\ 
&\quad+\delta_x\bar{F}^\top B_p^\top\bar{P}\bar{A}_c\,x+\delta_x^\top\bar{F}^\top B_p^\top\bar{P}B_p\bar{F}\delta_x\notag\\
&\quad+\delta_x^\top \bar{F}^\top B_p^\top\bar{P} B_p\delta_u+\delta_uB_p^\top \bar{P}\bar{A}_c x\notag\\
&\quad+\delta_uB_p^\top \bar{P}B_p\bar{F}\delta_x+\delta_uB_p^\top\bar{P}B_p\delta_u-x^\top\bar{Q}x,\notag\\
&\leq-\lambda_{\min}(\bar{Q})\|x\|^2+2\|\bar{A}_c^{\top}\bar{P}B_p\bar{F}\|\|\delta_x\|\|x\|\notag\\
&\quad+\|\bar{F}^{\top}B_p^{\top}\bar{P}B_p\bar{F}\|\|\delta_x\|^{2}+2\|\bar{A}_c^\top\bar{P}B_p\|\|\delta_u\|\|x\|\notag\\
&\quad+\|B_p^\top \bar{P} B_p\| \|\delta_u\|^2+2\|B_p^\top \bar{P} B_p\bar{F}\| \|\delta_x\|\|\delta_u\|\label{eq:quadra}.
\end{align}
Consider the solution $\eta$ of the quadratic equation in $\|x\|$ given in~\eqref{eq:quadra}. 
Based on \textbf{Lemmas~\ref{lem:quanterror}} and~\textbf{\ref{lem:delta}}, $\|\delta_x\|$ and $\|\delta_u\|$ are bounded as \textrevise{$\|\delta_x\|\le\frac{\sqrt{n_p}}{2\gamma_1(k)}$}, and $\|\delta_u\|=|\delta_u|\leq\frac{1}{\gamma_0\gamma_1(k)}\bar{e}$, where $\bar{e}$ is defined as~\eqref{eq:maxe}, respectively.
Substituting these bounds into the solutions of the quadratic equation yields the upper bound, i.e., $\eta \leq \gamma_1(k)^{-1}h_1(\bar{P},\bar{Q})$, where $h_1$ is defined as~\eqref{h1}.
Moreover, since $V_{k+1}(x) - V_k(x)$ is concave, $||x|| > \gamma_1(k)^{-1}h_1(\bar{P},\bar{Q})$ implies that $V_{k+1}(x)-V_k(x)$ becomes negative.
Therefore, if $\gamma_1$ satisfies~\eqref{thm:gamma1}, then $V_{k+1}(x)-V_k(x)<0$, which implies that the closed-loop system is Lyapunov stable.

Moreover, the Lyapunov function satisfies $0\leq\lambda_{\min}(\bar{P}) \|x_k\|^2\le V_k(x)\le\lambda_{\max}(\bar{P}) \|x_k\|^2$, where $0\leq\lambda_{\min}(\bar{P})$ since $\bar{P}\in{\mathbb{S}}_+$.
Summing the inequality $V_{k+1}(x) - V_k(x) \le -{\mathsf{c}}\|x_k\|^2$
from $k=0$ to $N-1$ yields $V_N(x) - V_0(x)\le -{\mathsf{c}}\sum_{k=0}^{N-1} \|x_k\|^2$ where ${\mathsf{c}}\in \R^+$ and $N\in \N$.
Since $V_N(x) \ge 0$, it follows that ${\mathsf{c}}\sum_{k=0}^{\infty} \|x_k\|^2\le\lim_{N\to\infty} (V_0(x) - V_N(x))\le V_0(x)$, which implies that $\lim_{k\to\infty}\|x_k\|=0 $.
Therefore,  the closed-loop system is asymptotically stable.
\end{proof}

\begin{remark}
It follows from the condition $\|x\| > \gamma_1(k)^{-1} h_1(\bar{P}, \bar{Q})$ that as the state norm $\|x\|$ decreases, the quantization gain $\gamma_1$ must increase to satisfy the inequality.
In particular, if $x \to 0$ as $k \to \infty$, then $\gamma_1(k) \to \infty$.
This implies that the quantization gain increases over time, which supports the asymptotic stability of the closed-loop system.
\end{remark}

Moreover, the quantization gain $\gamma_1$ must be appropriately selected to prevent overflow and underflow. 
The following lemma gives a sufficient condition.

\begin{lemma}\label{lem:gammap}
Suppose $\gamma_0$ satisfies~\eqref{ieq:gamma0}.
If $\gamma_1(k)$ satisfies
 \begin{align}\label{ieq:gamma1}
|\gamma_1(k)|< \frac{t-1}{2n_p|\gamma_0|||F||_{\infty}||x(k)||_{\infty}} 
 \end{align}
then overflow of $\check{F}\check{x}$ does not occur.
\end{lemma}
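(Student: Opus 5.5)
The plan is to mirror the proof of \textbf{Lemma~\ref{lem:gammac}}. I will bound the single integer $\check{F}\check{x}=\sum_{j=1}^{n_p}\check{F}_j\check{x}_j$ entrywise and show that its magnitude stays strictly below $\frac{t}{2}$, so that \textbf{Definition~\ref{def:overflow}} is not triggered. Since $\check{F}\check{x}$ is the plaintext that $c_u$ encrypts, this is the quantity to control. Note that $\check{x}=\Ecd_{\gamma_1(k)}(x)$ itself is already covered, because its entries are smaller than those of the product whenever $\check{F}\neq 0$.

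First, I control each factor. By the triangle inequality,
\begin{align*}
|\check{F}\check{x}|\le\sum_{j=1}^{n_p}|\check{F}_j|\,|\check{x}_j|\le n_p\,\|\check{F}\|_\infty\,\|\check{x}\|_\infty .
\end{align*}
For the state factor, the rounding property gives $|\check{x}_j|=|\lceil\gamma_1(k)x_j\rfloor|\le|\gamma_1(k)|\,\|x(k)\|_\infty+\frac12$. For the gain factor, I use the assumption that $\gamma_0$ satisfies \eqref{ieq:gamma0}. By the argument of \textbf{Lemma~\ref{lem:gammac}}, $\check{F}$ does not overflow and $|\check{F}_j|\le|\gamma_0|\,\|F\|_\infty+\frac12$.

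Next, I multiply these bounds and insert \eqref{ieq:gamma1}. The leading term is $n_p|\gamma_0||\gamma_1(k)|\,\|F\|_\infty\|x(k)\|_\infty$, which \eqref{ieq:gamma1} makes strictly smaller than $\frac{t-1}{2}$. I then argue as in \textbf{Lemma~\ref{lem:gammac}}: the product is an integer, and it is the rounded version of the real quantity $\gamma_2\bar{F}\bar{x}$, whose magnitude is close to $\gamma_0\gamma_1 Fx$. The additional slack of $\frac12$ in $\frac{t-1}{2}+\frac12=\frac t2$ is what should absorb the rounding.

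The main obstacle is the cross terms produced by rounding both factors. Expanding the product leaves, in addition to the leading term,
\begin{align*}
\tfrac{n_p}{2}\left(|\gamma_0|\,\|F\|_\infty+|\gamma_1(k)|\,\|x(k)\|_\infty\right)+\tfrac{n_p}{4},
\end{align*}
and this need not fit into a single slack of $\frac12$. To handle it, I would first try to work directly with $\|\check{F}\|_\infty$, as in \eqref{thm:gamma1}, which removes the gain-rounding term. The bound then becomes $n_p\|\check{F}\|_\infty\bigl(|\gamma_1(k)|\,\|x(k)\|_\infty+\frac12\bigr)$, leaving only $\frac{n_p}{2}\|\check{F}\|_\infty$ to absorb. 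I would absorb it either by tightening the constant (replacing $t-1$ by $t-1-n_p\|\check{F}\|_\infty$) or by observing that in the operating regime $\gamma_1(k)\|x(k)\|_\infty\gg 1$, so that the rounding contribution is dominated. If the bound as literally stated cannot absorb these terms, I would present the result with this explicit margin and state that the margin is needed.
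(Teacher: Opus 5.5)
Your diagnosis is correct, and it is exactly where the paper's own proof breaks. The paper argues as in your first pass. It shows $\sum_j|\gamma_0F_j|\,|\gamma_1(k)x_j(k)|<\frac{t-1}{2}$ and then asserts $\sum_j|\lceil\gamma_0F_j\rfloor|\,|\lceil\gamma_1(k)x_j(k)\rfloor|\le n_p|\gamma_0|\|F\|_\infty|\gamma_1(k)|\|x(k)\|_\infty+\frac12$. That lets one slack of $\frac12$ absorb the rounding of both factors in all $n_p$ products, i.e., it silently drops the cross terms you computed. That inequality is false, and so is the lemma as literally stated, so you should commit rather than hedge. Take $n_p=1$, $t=16$, $F=1$, $x(k)=1$, $\gamma_0=0.6$, $\gamma_1(k)=12$. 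Both \eqref{ieq:gamma0} ($0.6<7.5$) and \eqref{ieq:gamma1} ($12<12.5$) hold, yet $\check F=1$, $\check x=12$ and $\check F\check x=12\ge 8=\frac t2$. Taking $\gamma_1(k)x(k)$ just below $\frac{t-1}{1.2}$ gives $\check F\check x\approx 0.83\,t$ for every $t$. Gain rounding is not the only culprit. With $n_p=5$, $t=128$, $F=[1,\dots,1]$, $\gamma_0=1$ (so $\check F=F$), $x(k)=[1,\dots,1]^\top$ and $\gamma_1(k)=12.6<12.7$, both hypotheses and the upper bound in \eqref{thm:gamma1} hold, but $\check x_j=13$ and $\check F\check x=65>64$.

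So an explicit margin is necessary, and of your two repairs only the first is a proof. Under $|\gamma_1(k)|<\frac{t-1-n_p\|\check F\|_\infty}{2n_p\|\check F\|_\infty\|x(k)\|_\infty}$, your bound gives $|\check F\check x|\le n_p\|\check F\|_\infty\bigl(|\gamma_1(k)|\|x(k)\|_\infty+\frac12\bigr)<\frac{t-1}{2}$, which is complete. Note that this replaces the hypothesis rather than proving the lemma under \eqref{ieq:gamma1}. To keep the lemma in terms of $|\gamma_0|\|F\|_\infty$, you may substitute the upper bound $|\gamma_0|\|F\|_\infty+\frac12$ for $\|\check F\|_\infty$, since the condition is monotone in it. The ``operating regime'' option is not an argument. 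The threshold is absolute and the room between $\frac{t-1}{2}$ and $\frac t2$ is exactly $\frac12$, so an additive error of size $\frac{n_p}{2}\|\check F\|_\infty$ is never negligible; the first example overflows with $\gamma_1(k)\|x(k)\|_\infty$ of order $t$. Finally, your aside that $\check x$ is covered because its entries are smaller than the product is wrong. The sum $\sum_j\check F_j\check x_j$ can cancel, and an entry with $\check F_j=0$ (e.g.\ $\check F_3=0$ in the paper's example) is not constrained by any bound on $\check F\check x$. The aside is not needed for this lemma anyway: under the repaired hypothesis, $\check x$ is controlled directly, since $|\gamma_1(k)|\|x(k)\|_\infty<\frac{t-1}{2}$ whenever $\check F\neq 0$.
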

\begin{proof}
Assume that the condition~\eqref{ieq:gamma1} holds and $\gamma_0$ satisfies~\eqref{ieq:gamma0}.
The encoded input $\check{F}\check{x}$ is given by $\check{F}\check{x}(k) = \sum_{j=1}^{n_p} \check{F}_j \check{x}_j(k)$ with $j \in \{1,\ldots,n_p\}$.
Then, for encoded input, we have,
$\sum_{j=1}^{n_p} |\gamma_0 F_j| |\gamma_1(k)x(k)| \leq n_p|\gamma_0| \|F\|_{\infty} |\gamma_1(k)|||x(k)||_{\infty} < \frac{t-1}{2}$.
Since $\check{F}_j = \left\lceil \gamma_0 F_j \right\rfloor$ and $\check{x}_j(k) = \left\lceil \gamma_1(k) x_j(k) \right\rfloor$, the rounding function satisfies
$\sum_{j=1}^{n_p}\left| \left\lceil \gamma_0 F_j \right\rfloor \right|\left| \left\lceil \gamma_1(k) x_j(k) \right\rfloor \right| 
\leq n_p|\gamma_0| \|F\|_{\infty}|\gamma_1(k)| \|x(k)\|_{\infty} + \frac{1}{2} < \frac{t-1}{2} + \frac{1}{2} = \frac{t}{2}$, which completes this proof.
\end{proof}

Therefore, from \textbf{Lemmas~\ref{lemma:gamma0}} and \textbf{\ref{theorem:2}}, the stability condition of reliability is guaranteed. 
Moreover, from \textbf{Lemmas~\ref{lem:gammac}} and \textbf{\ref{lem:gammap}}, overflow is prevented, thereby ensuring the numerical safety condition of reliability.
Consequently, if the quantization gains $\gamma_0$ and $\gamma_1$ in \eqref{ecd0}--\eqref{dcd2} satisfy the conditions in \textbf{Lemmas~\ref{lemma:gamma0}} to \textbf{\ref{lem:gammap}}, then the reliability of the encrypted control system is guaranteed.
This completes the proof of \textbf{Theorem~\ref{thm:results}}.

\begin{remark}
This study analyzes the stability condition, taking into account the encryption error $\delta_u$, which differs from that in~\cite{Teranishi20}.
Thus, \textbf{Lemmas~\ref{lemma:gamma0}} and \textbf{\ref{lem:gammac}} are given by~\cite{Teranishi20}, while \textbf{Lemmas~\ref{theorem:2}} and \textbf{\ref{lem:gammap}} are original contributions of this study.
\end{remark}

\textrevise{
\begin{remark}
The above analysis also applies to static-key LWE-based encrypted control~\cite{Junsoo20}, because the encryption error bound~\eqref{ieq:decerror} does not depend on the step $k$.
\end{remark}}

\begin{remark} 
From~\cite{Kawase2022}, it is suggested that $\gamma_0$ should be chosen as small as possible within a range that ensures the stability of the encoded gain, while $\gamma_1$ should be taken as large as possible to improve quantization accuracy. 
This choice is consistent with empirical design guidelines in dynamic quantization. 
On the other hand, as in~\cite{Azuma2008} and~\cite{Teranishi2021}, more systematic approaches based on the optimization of encoders with respect to appropriately defined objective functions have also been proposed. 
The investigation of such optimal design methods for $\gamma_0$ and update rule $\gamma_1$ is left for future work.
\end{remark}

\section{Numerical Example}
This section demonstrates the design of the encoder and decoder for encrypted control through numerical simulations.
The computations were performed on a MacBook Air equipped with an Apple M4 chip and 24\,GB of memory, running macOS Tahoe 26.5.
The encryption scheme was implemented using the C++ NTL library.

\subsection{Example Setup}
Consider a state-feedback control system~\eqref{eq:system} with
\begin{align*}
    &A_p = \begin{bmatrix} 1.0000 & 0.0100 & 0.0000\\ -0.0001 & 0.9999 & 0.0099 \\ -0.0298 & -0.0200 & 0.9900 \end{bmatrix},\notag\\
    &B_p = \begin{bmatrix} 0.0100 \\ 0.0100 \\ -0.0002 \end{bmatrix},\notag
\end{align*}
and $F = \begin{bmatrix}-1.8249 & -1.2118 & 0.3565\end{bmatrix}$, which was designed by the LQR using input and state weights set to $1$ and $I_3$, respectively.
The parameters $\kappa$, $Q$, and $\bar{Q}$ were set to $6$, $I_3$, and $I_3$, respectively.
The encryption parameters were chosen as ${\mathsf{p}}=(n,t,q,\sigma,\nu,d) = (7,2^{40},2^{40},4.0,2,40)$. 
These parameters achieve a security level of $\lambda_Q = 128$ bits according to the Lattice estimator~\cite{LWE_estimator} with the number of ciphertexts set to $\mathsf{m}=7$.
The matrix $P$ was computed as
\begin{align*}
    P = 
    \begin{bmatrix}
    29.5329 & -25.3125 & -5.0626 \\
    -25.3125 & 82.4275 & 2.1743 \\
    -5.0626 & 2.1743 & 61.3562
    \end{bmatrix}.
\end{align*}
The quantization gain was selected as $\gamma_0 = 1.1622$, and the resulting encoded gain and quantized gain are given by $\check{F} = \begin{bmatrix} -2 & -1 & 0 \end{bmatrix}$, and $\bar{F} = \begin{bmatrix} -1.7208 & -0.8604 & 0 \end{bmatrix}$, respectively.
$\bar{e} = 92736$, and the matrix $\bar{P}$ was computed as
\begin{align*}
    \bar{P} = 
    \begin{bmatrix}
    27.6351 & -19.1927 & -15.6086 \\
    -19.1927 & 95.0851 & -1.8341 \\
    -15.6086 & -1.8341 & 101.0115
    \end{bmatrix}.
\end{align*}
The time-varying quantization gain was chosen as $\gamma_1(0) = 2.878 \times 10^{4}$ with the initial state $x(0) = \begin{bmatrix}1 & 1 & 1\end{bmatrix}^\top$.
\textrevise{At each step, $\gamma_1(k)$ is updated from the measured $\|x(k)\|$ as
$\gamma_1(k)=\min\!\;\{(t-1)/(2n_p\|\check{F}\|_\infty\|x(k)\|_\infty),\; h_1(\bar{P},\bar{Q})/\|x(k)\|+0.01\}$ to satisfy Theorem~\ref{thm:results} with bias~$0.01$.}

\subsection{Results}
Fig.~\ref{fig:dyn} shows the results of the encrypted control using the static and time-varying encoders based on \textbf{Theorem~\ref{thm:results}}, as well as those using only static encoders with $\gamma_1(k) \equiv 2.878 \times 10^{4}$. 
Fig.~\ref{fig:dyn_wrong} shows the corresponding results obtained using the method in~\cite{Teranishi20}.

Fig.~\ref{fig:dyn}(a) and Fig.~\ref{fig:dyn_wrong}(a) show the time response of $\delta_u$. 
In Fig.~\ref{fig:dyn}, the red and blue lines correspond to the response obtained with the time-varying encoders designed according to \textbf{Theorem~\ref{thm:results}} and the static encoders, respectively. 
In Fig.~\ref{fig:dyn_wrong}, the blue line represents the response obtained with the time-varying encoders from~\cite{Teranishi20}.
Figs.~\ref{fig:dyn}(b)--(d) and \ref{fig:dyn_wrong}(b)--(d) show the time responses of the states $x_1$, $x_2$, and $x_3$, respectively. 
In Fig.~\ref{fig:dyn}, the red and blue lines correspond to the responses with the time-varying and static encoders, respectively, while the gray line denotes the response of the original system. 
In Fig.~\ref{fig:dyn_wrong}, the blue line represents the responses with the time-varying encoders from~\cite{Teranishi20}, and the gray line indicates the responses of the original system.
Fig.~\ref{fig:dyn}(e) and Fig.~\ref{fig:dyn_wrong}(e) show the time response of the input $u$. 
In Fig.~\ref{fig:dyn_wrong}, the blue line represents the response with the time-varying encoders from~\cite{Teranishi20}, and the gray line indicates the response of the original system.
Finally, Fig.~\ref{fig:dyn}(f) and Fig.~\ref{fig:dyn_wrong}(f) show the behavior of $\gamma_1$. 
In Fig.~\ref{fig:dyn}, the red and blue lines show the behavior of the quantization gain $\gamma_1$ for the proposed time-varying and static encoders, respectively, and the red dashed line represents the upper bound to prevent overflow. 
In Fig.~\ref{fig:dyn_wrong}, the blue line shows the behavior of $\gamma_1$ for the time-varying encoders from~\cite{Teranishi20}.
\textrevise{
However, static encoders show residual errors in the states and input.
Fig.~\ref{fig:dyn} shows input residuals of order $10^{-3}$ and state residuals of order $10^{-4}$.
Once $\|x(k)\|$ becomes sufficiently small that $\gamma_1 < h_1(\bar{P},\bar{Q})\|x(k)\|^{-1}$, the fixed~$\gamma_1$ violates
the lower bound in~\eqref{thm:gamma1}, and the static encoders fail to achieve asymptotic stability.}
Moreover, Fig.~\ref{fig:dyn_wrong} shows that the quantization gain without noise evaluation fails to properly handle encryption noise, resulting in significant deviations from the original control signals.
These results demonstrate that the proposed encoding and decoding scheme based on \textbf{Theorem~\ref{thm:results}} enables reliable encrypted control systems.

\begin{figure}[tb]
    \centering
    \begin{subfigure}{0.47\linewidth}\label{fig:diffu}
        \centering
        \includegraphics[width=\linewidth]{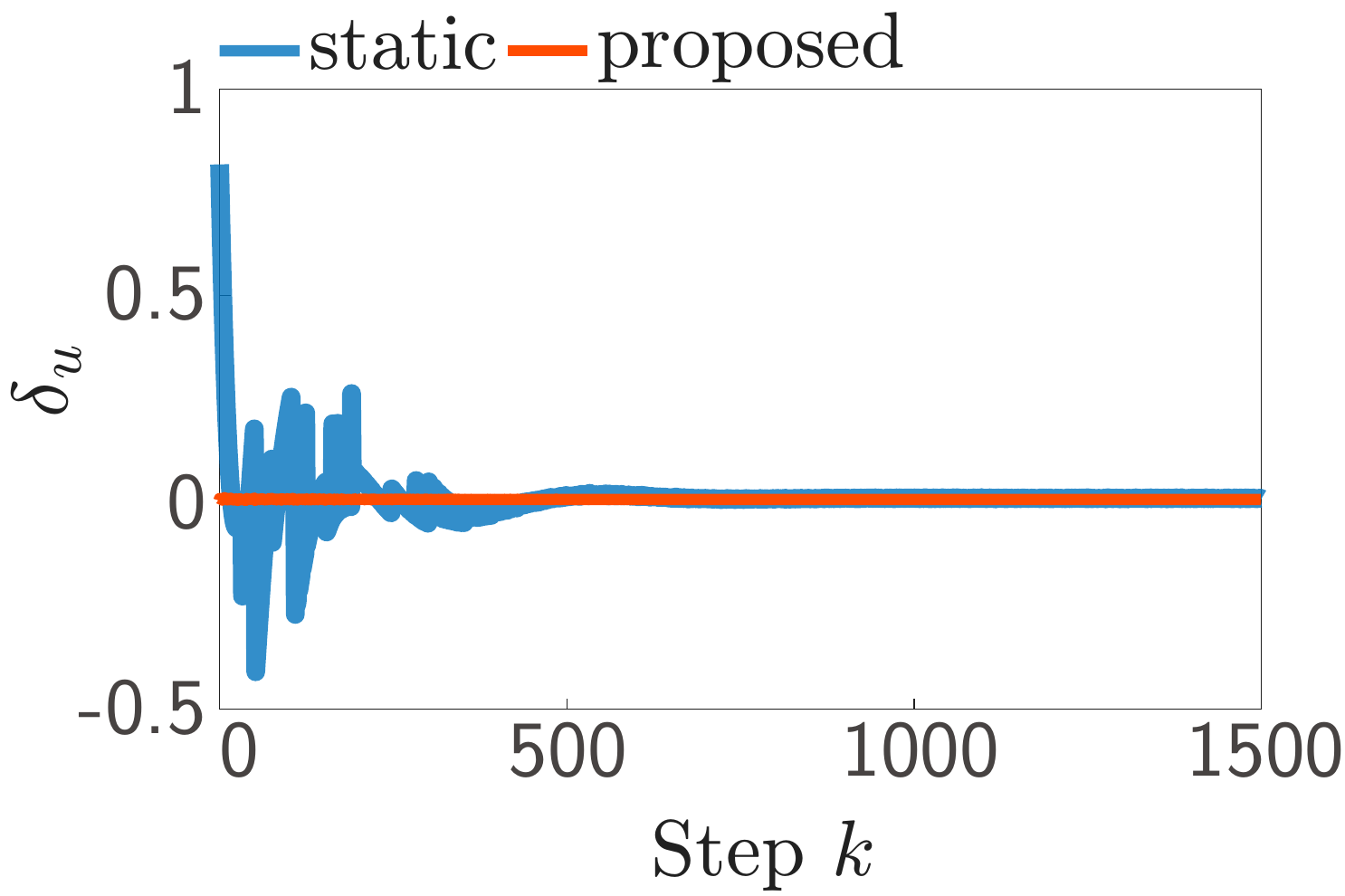}
        \caption{$\delta_u$}
    \end{subfigure}
    \hfill
    \begin{subfigure}{0.47\linewidth}
        \centering
        \includegraphics[width=\linewidth]{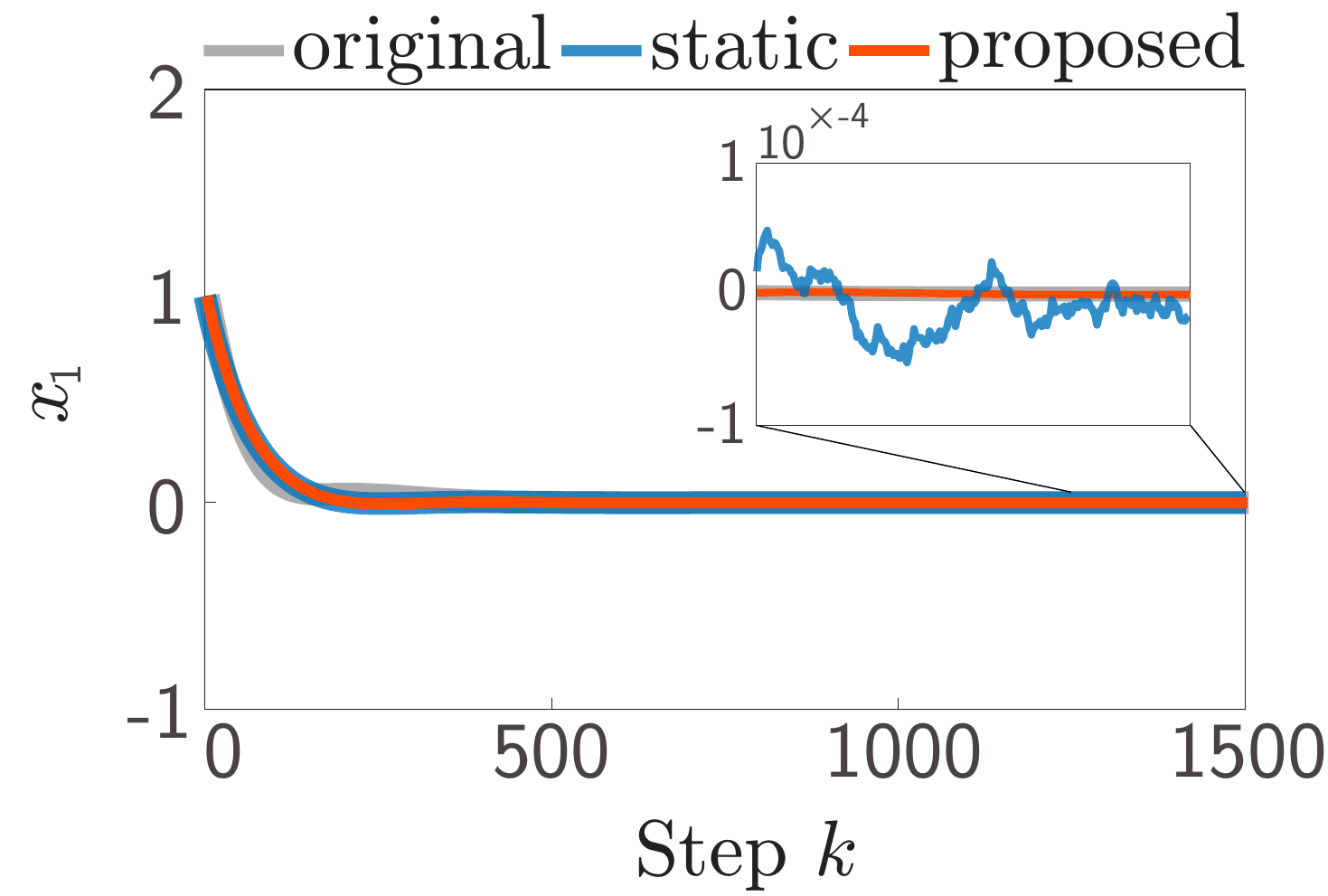}
        \caption{$x_1$}
    \end{subfigure}

    \begin{subfigure}{0.47\linewidth}
        \centering
        \includegraphics[width=\linewidth]{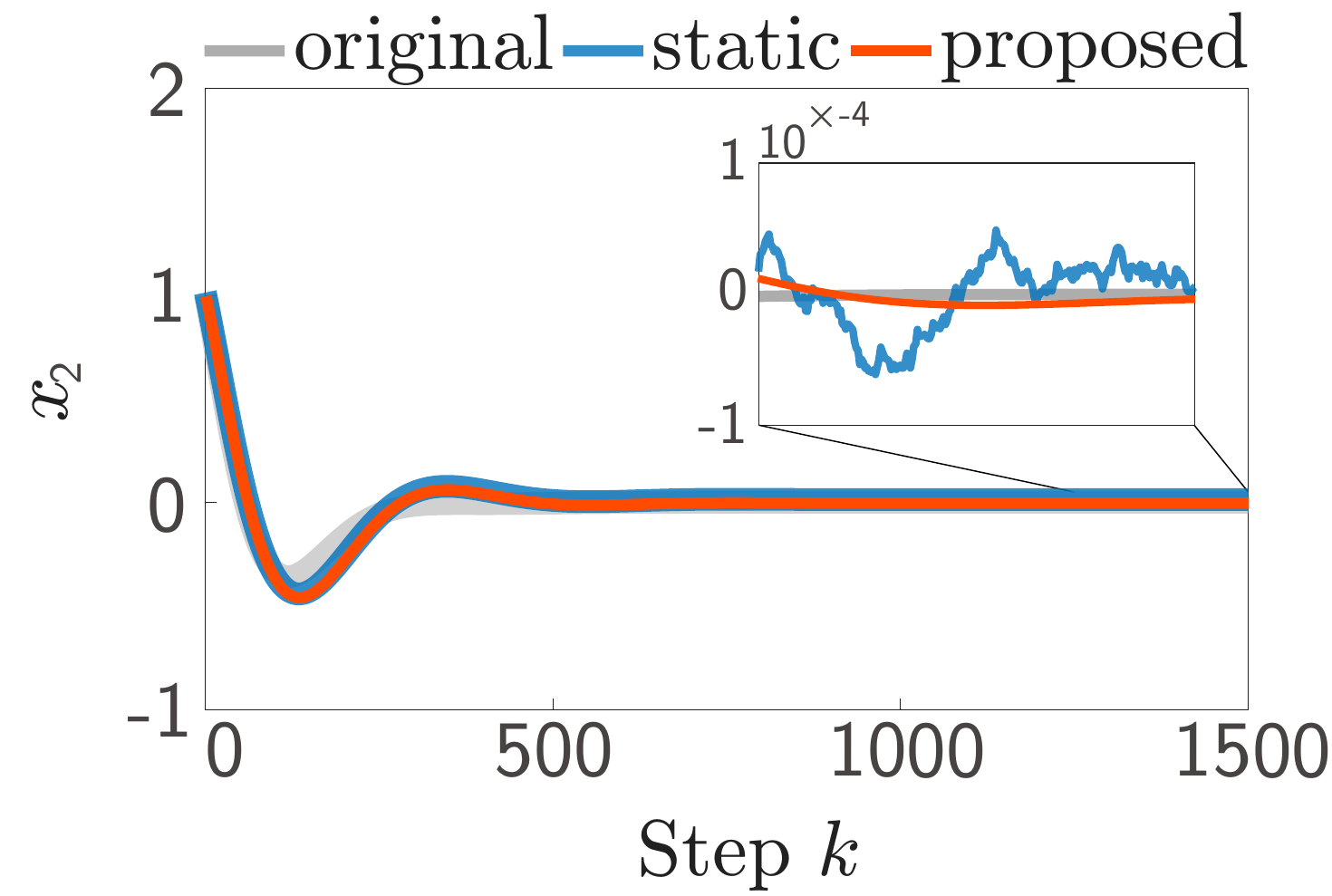}
        \caption{$x_2$}
    \end{subfigure}
    \hfill
    \begin{subfigure}{0.47\linewidth}
        \centering
        \includegraphics[width=\linewidth]{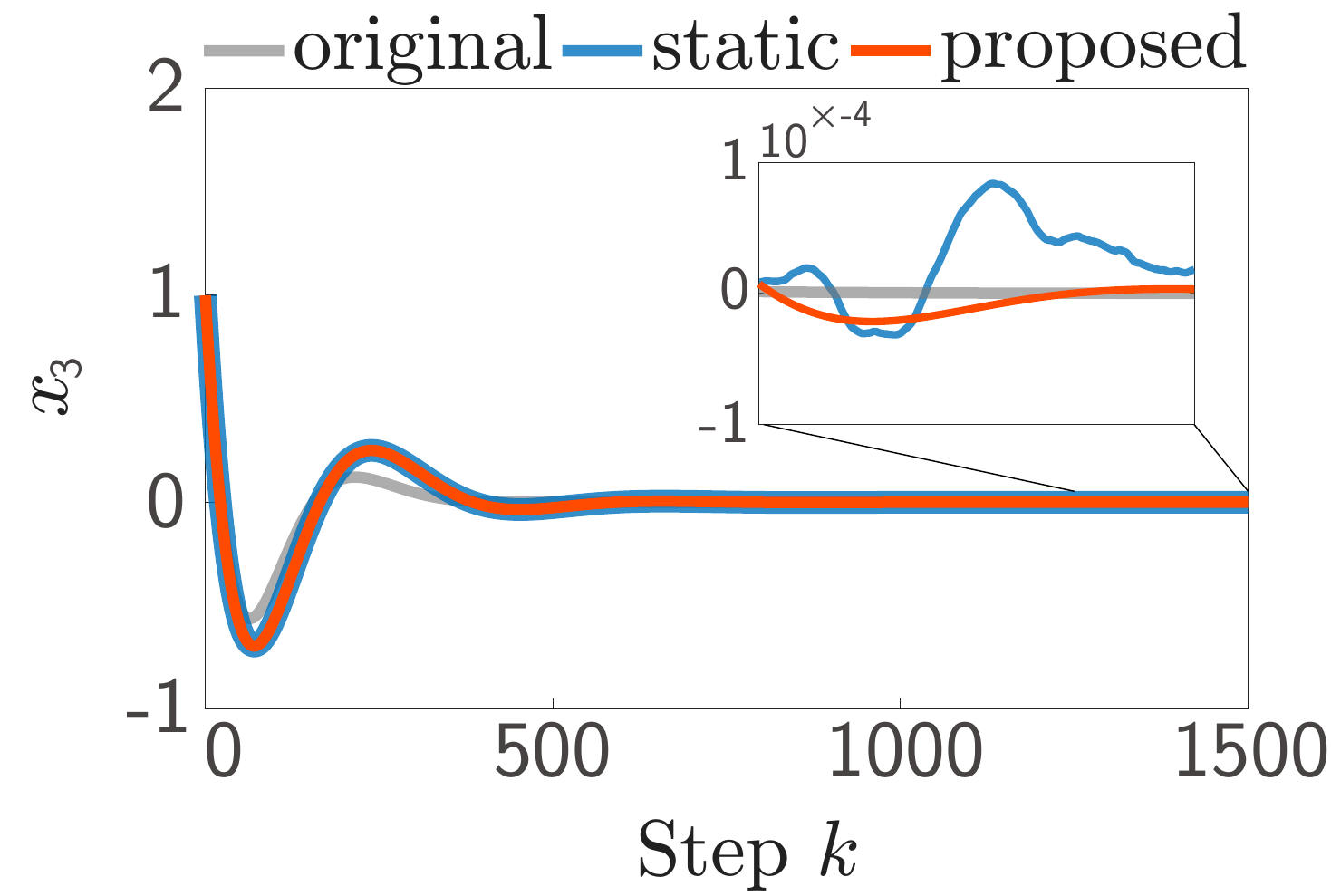}
        \caption{$x_3$}
    \end{subfigure}

    \begin{subfigure}{0.47\linewidth}
        \centering
        \includegraphics[width=\linewidth]{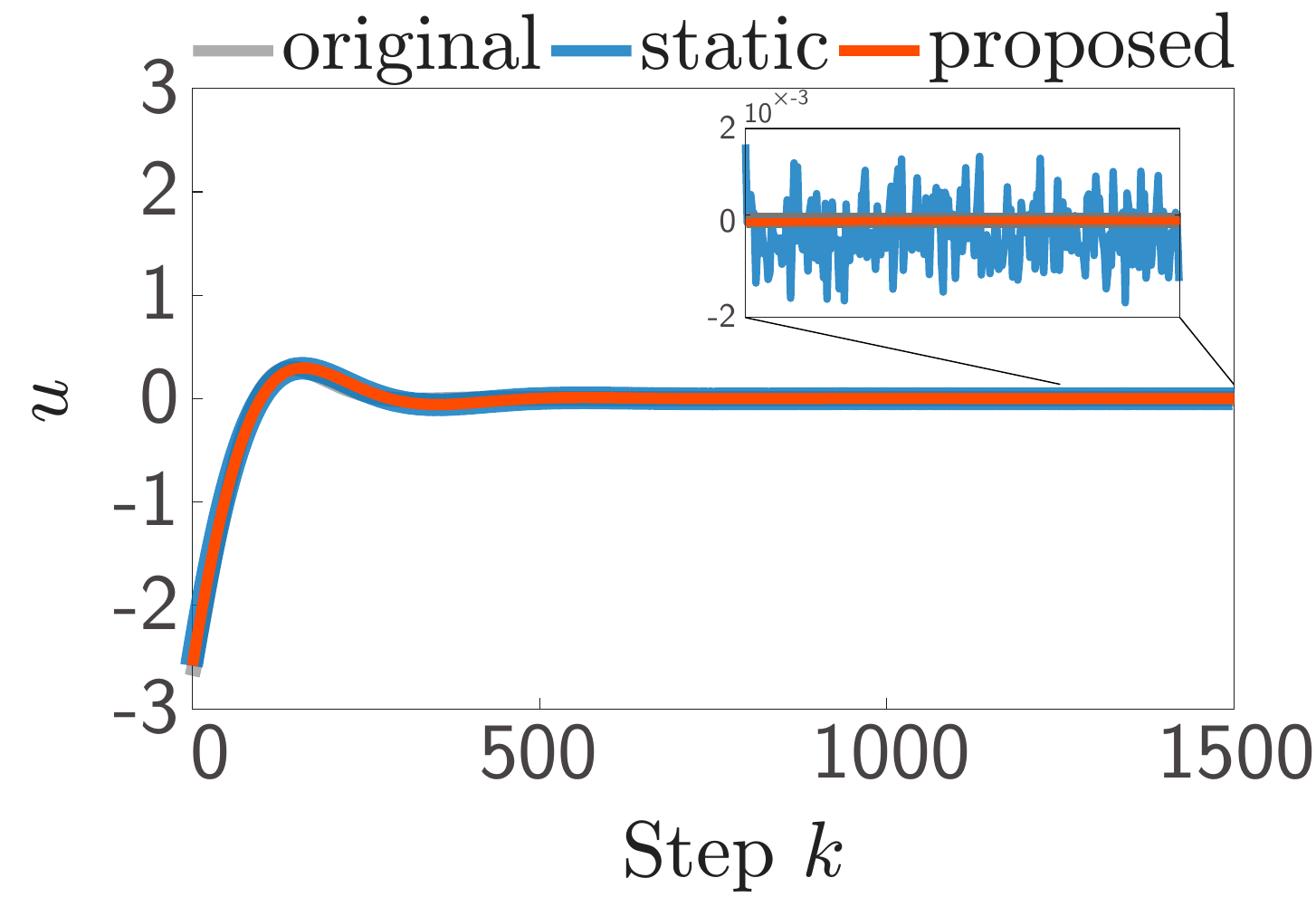}
        \caption{$u$}
    \end{subfigure}
    \hfill
    \begin{subfigure}{0.47\linewidth}
        \centering
        \includegraphics[width=1.0\linewidth]{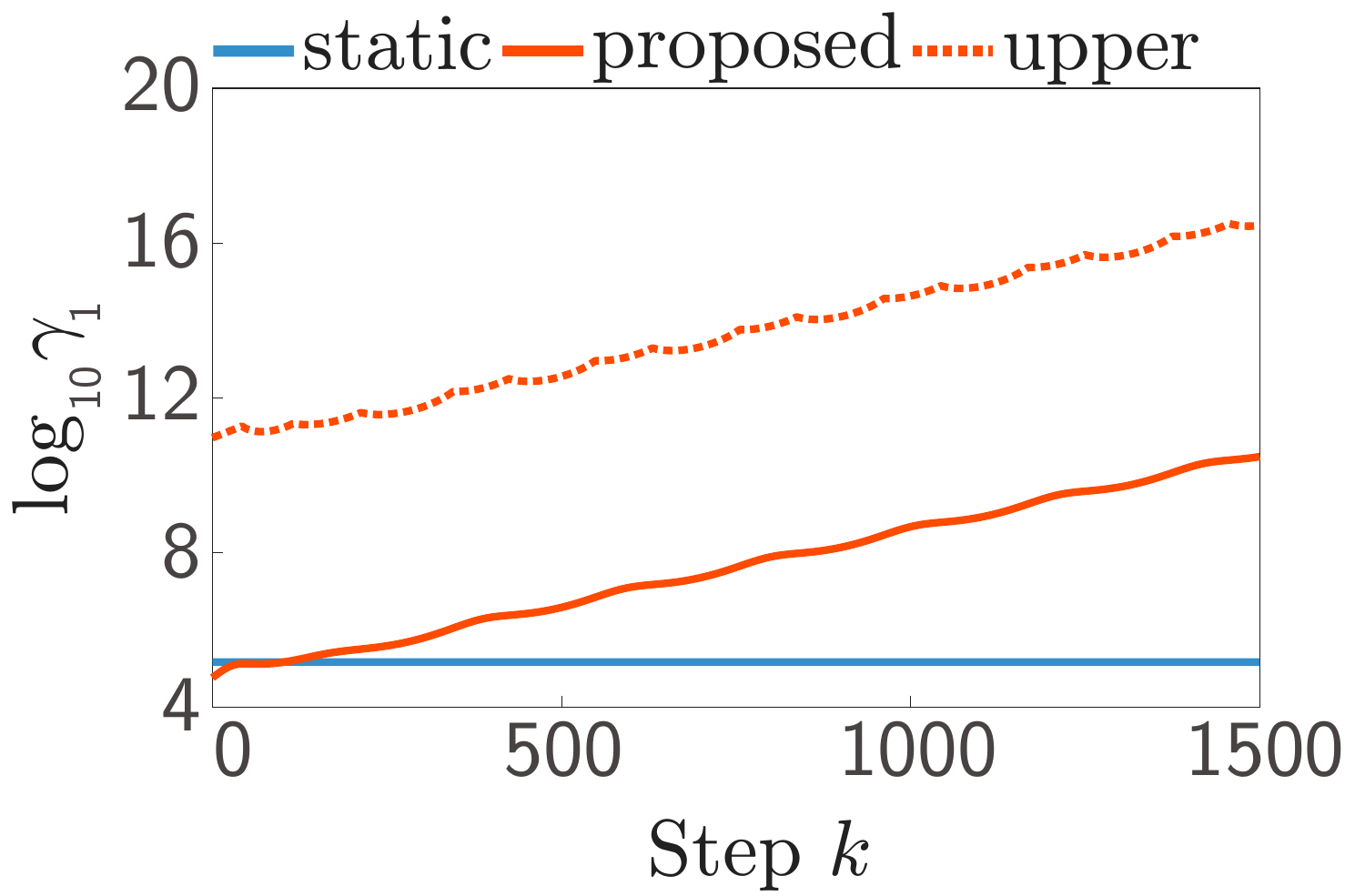}
        \caption{$\gamma_{1}$}
    \end{subfigure}

    \caption{Comparison between the signals of the original, encrypted control system with the static and time-varying encoders from \textbf{Theorem~\ref{thm:results}}, and those with static encoders alone.}
    \label{fig:dyn}
\end{figure}

\begin{figure}[tb]
    \centering
    \begin{subfigure}{0.47\linewidth}
        \centering
        \includegraphics[width=0.95\linewidth,keepaspectratio]{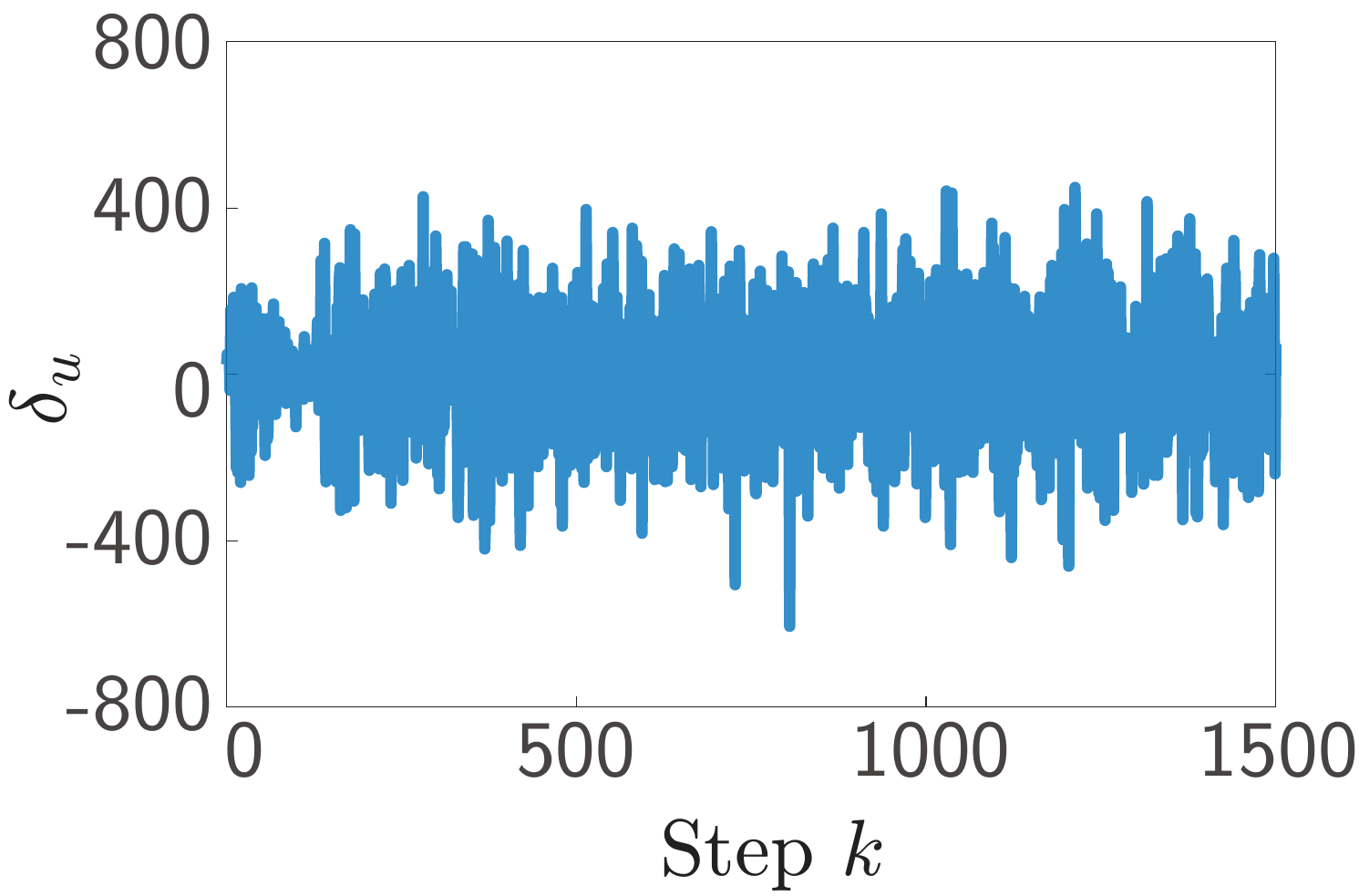}
        \caption{$\delta_u$}
    \end{subfigure}
    \hfill
    \begin{subfigure}{0.47\linewidth}
        \centering
        \includegraphics[width=\linewidth]{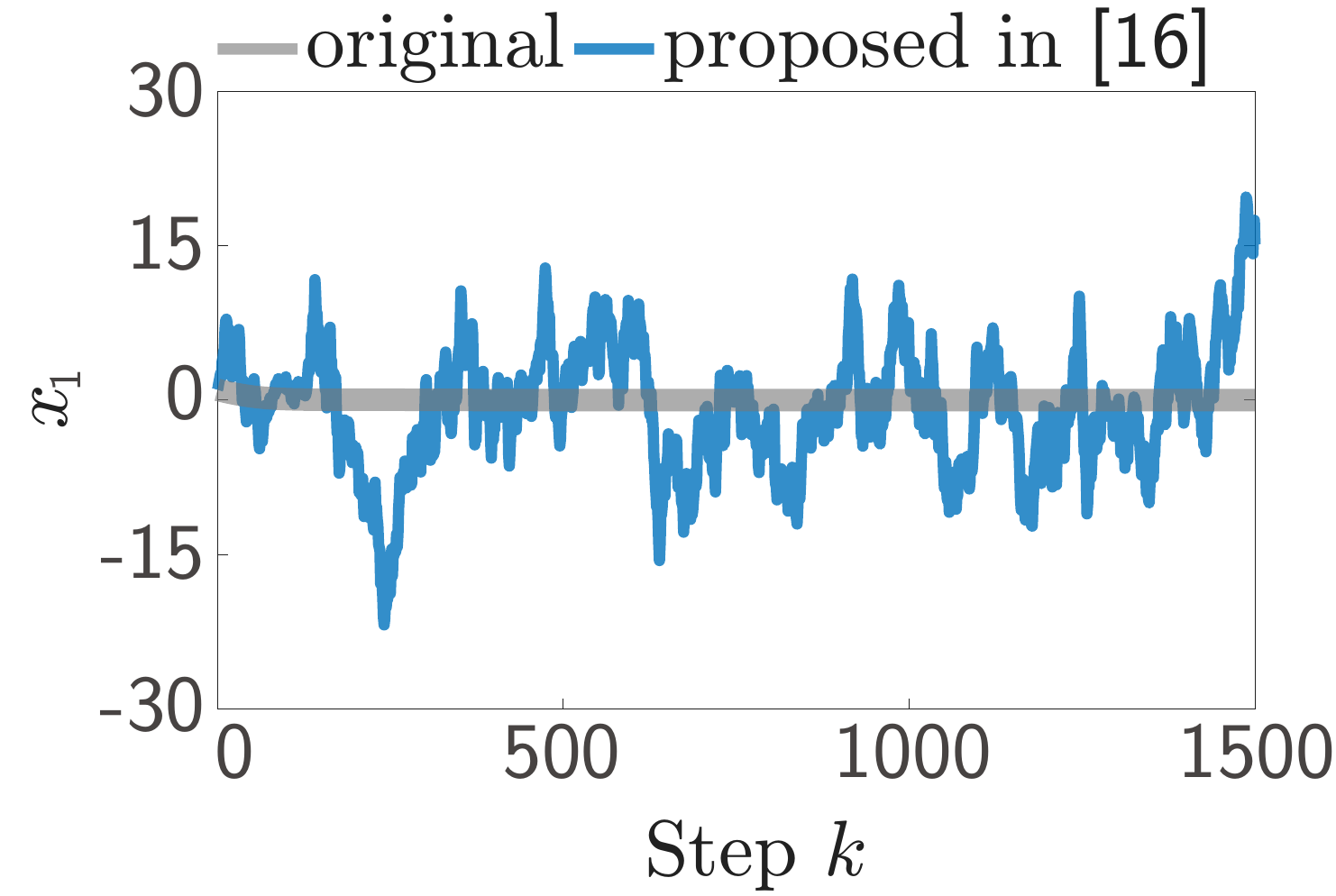}
        \caption{$x_1$}
    \end{subfigure}
    
    \begin{subfigure}{0.47\linewidth}
        \centering
        \includegraphics[width=\linewidth]{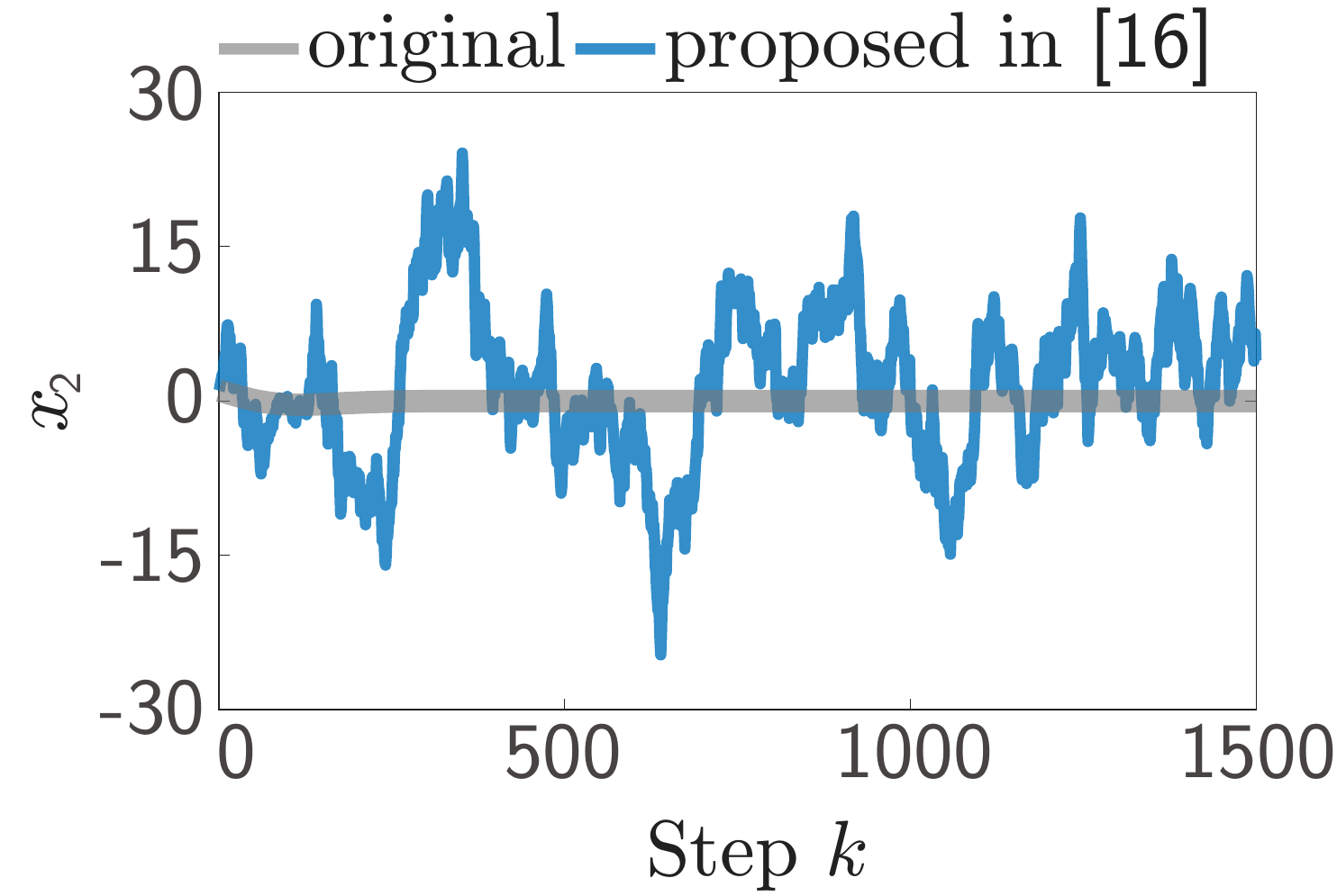}
        \caption{$x_2$}
    \end{subfigure}
    \hfill
    \begin{subfigure}{0.47\linewidth}
        \centering
        \includegraphics[width=\linewidth]{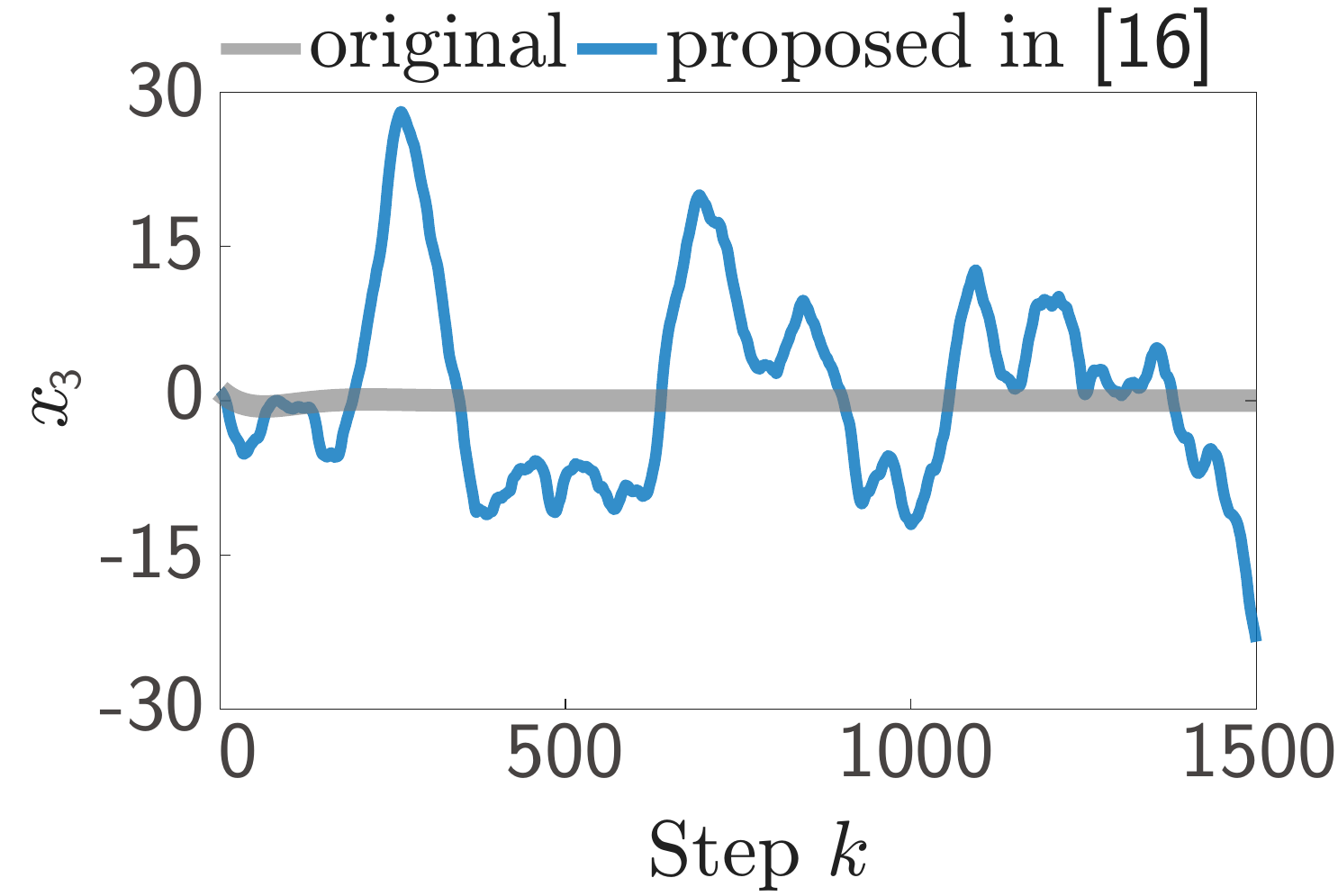}
        \caption{$x_3$}
    \end{subfigure}
    
    \begin{subfigure}{0.47\linewidth}
        \centering
        \includegraphics[width=\linewidth]{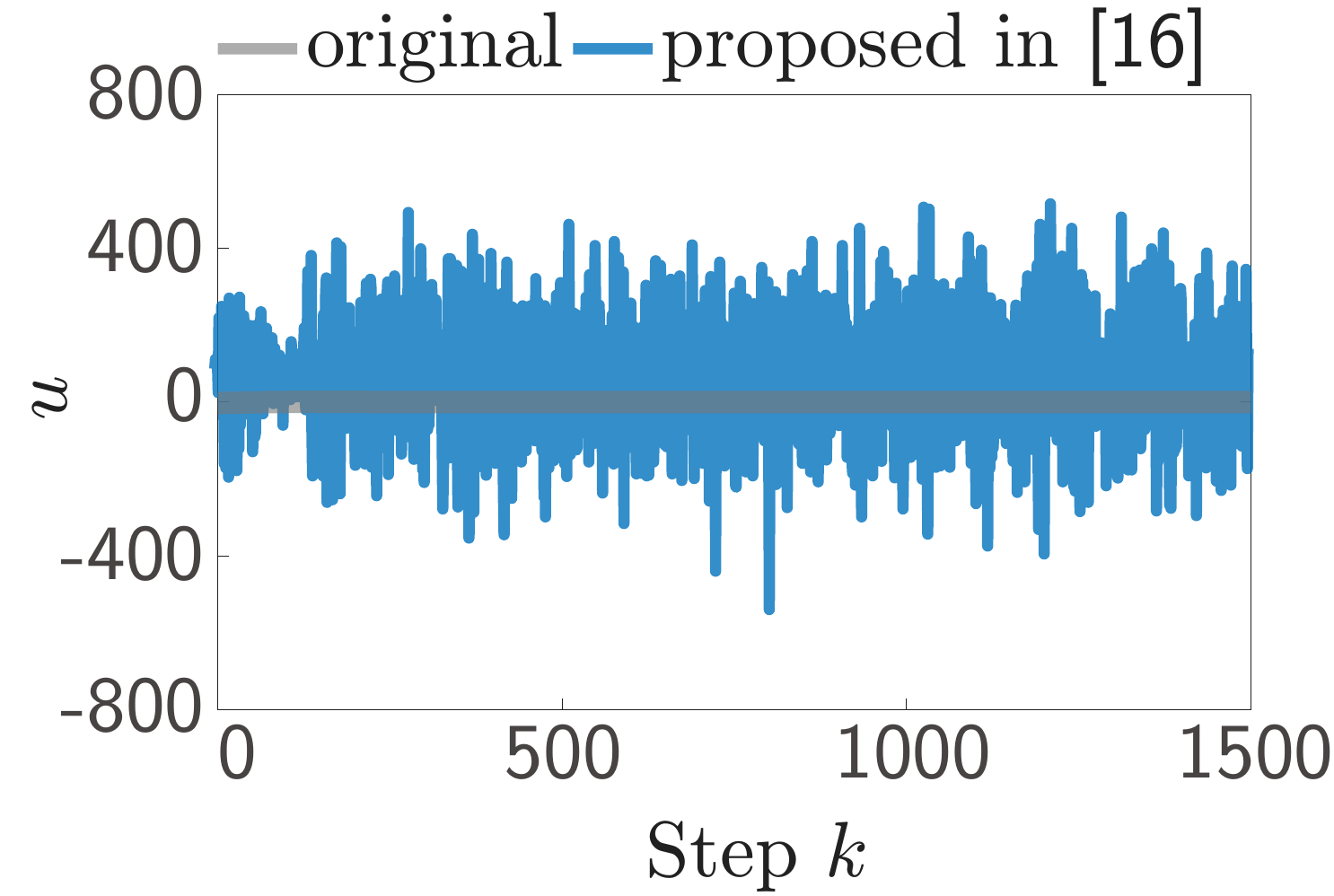}
        \caption{$u$}
    \end{subfigure}
    \hfill
    \begin{subfigure}{0.47\linewidth}
        \centering
        \includegraphics[width= 0.95\linewidth,keepaspectratio]{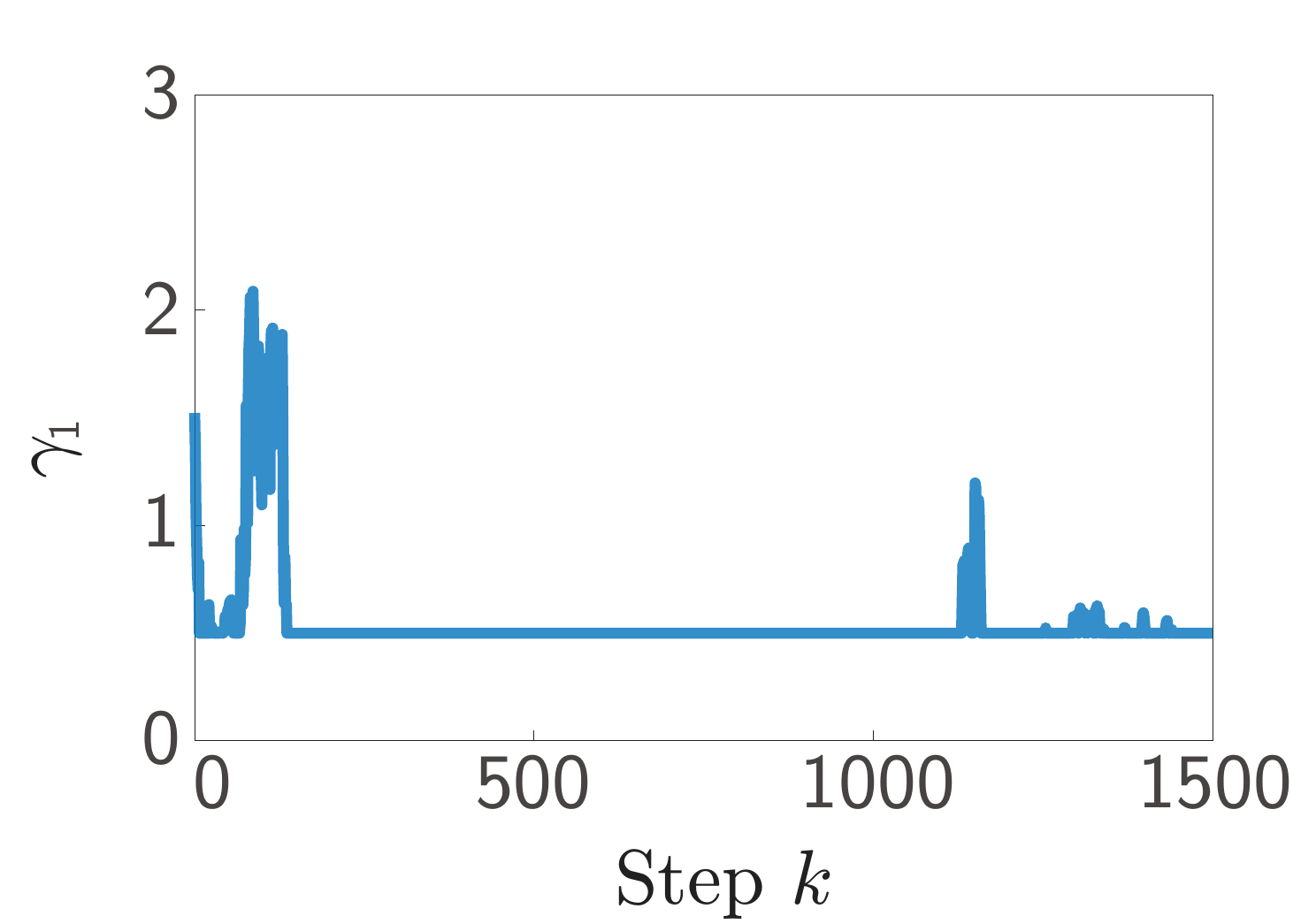}
        \caption{$\gamma_{1}$}
    \end{subfigure}
    \caption{Simulation results of the encrypted control system with static and time-varying encoders from~\cite{Teranishi20}.}
    \label{fig:dyn_wrong}
    \vspace{-3ex}
\end{figure}

\textrevise{
We measured the average per-step processing time over $10000$ control steps on the same platform.
The mean computation times were $(9.0 \pm 5.0)\times 10^{-4}$\,ms for encoding, $(3.7 \pm 0.9)\times 10^{-3}$\,ms for encryption, $(3.60 \pm 0.03)\times 10^{-1}$\,ms for homomorphic operations, $(1.0 \pm 0.5)\times 10^{-3}$\,ms for decryption, approximately zero for decoding and key update, $(3.85 \pm 0.05)\times 10^{-1}$\,ms for ciphertext update, giving a total time of $(7.51 \pm 0.07)\times 10^{-1}$\,ms per control step.
These results suggest that the proposed dynamic-key LWE-based encrypted controller has the potential for real-time implementation in practical control systems with millisecond-scale sampling periods.
}

\section{Conclusion}
In this study, we analyzed reliable dynamic-key LWE-based encrypted state-feedback control systems with time-varying encoders and decoders. 
We derived conditions on the corresponding quantization gains that guarantee asymptotic stability of the closed-loop system, and characterized the admissible range for avoiding overflow that ensures numerical safety.
A numerical example demonstrated that the proposed time-varying encoders and decoders achieve both asymptotic stability and numerical safety.

Future work will investigate the practical implementation of the proposed framework in real systems. 
In addition, we will extend the proposed framework to more general control system settings.

\section*{Acknowledgment}
This work was supported in part by the Japan Society for the Promotion of Science (JSPS) KAKENHI Grant Number JP26K00966.

\bibliographystyle{IEEEtran}
\bibliography{references}

\end{document}